\renewcommand{\subsection}{\@startsection
{subsection}{2}{0mm}{\baselineskip}{-0.25cm}
{\normalfont\normalsize\em}}
\newtheorem{theorem}{Theorem}[section]
\newtheorem{proposition}[theorem]{Proposition}
\newtheorem{corollary}[theorem]{Corollary}
\newtheorem{lemma}[theorem]{Lemma}
{\theoremstyle{definition}

\newtheorem{example}[theorem]{Example}

}
\theoremstyle{remark}
\newtheorem{remark}[theorem]{Remark}
\newcommand{\bba}{{\bf a}}
\newcommand{\bbb}{{\bf b}}
\newcommand{\bbc}{{\bf c}}
\newcommand{\bbv}{{\bf v}}
\newcommand{\bbx}{{\bf x}}
\newcommand{\bby}{{\bf y}}
\newcommand{\bbz}{{\bf z}}
\newcommand{\bbuno}{{\bf 1}}
\newcommand{\bbcero}{{\bf 0}}
\newcommand{\fq}{\mathbb{F}_q}
\newcommand{\fqd}{\mathbb{F}_{q^2}}
\newcommand{\fqr}{\mathbb{F}_{q^r}}
\newcommand{\fqdr}{\mathbb{F}_{q^{2r}}}
\newcommand{\cB}{\mathcal B}
\newcommand{\cH}{\mathcal H}
\newcommand{\cL}{\mathcal L}
\newcommand{\cV}{\mathcal V}
\newcommand{\cX}{\mathcal X}
\newcommand{\ev}{\mbox{\rm ev}}
\newcommand{\car}{\mbox{\rm char}}
\newcommand{\degg}{\mbox{\rm deg}}
\newcommand{\divv}{\mbox{\rm div}}
\newcommand{\res}{\mbox{\rm res}}
\newcommand{\supp}{\mbox{\rm supp}}
\newcommand{\tr}{\mbox{\rm tr}}
\newcommand{\wt}{\mbox{\rm wt}}
\begin{document}

\title[Quantum codes from Algebraic Geometry codes of Castle type]{Quantum error-correcting codes from Algebraic Geometry codes of Castle type
}

\author{Carlos Munuera} 
\address{Department of Applied Mathematics, University of Valladolid, Avda Salamanca SN, 47014 Valladolid, Castilla, Spain}
   \email{cmunuera@arq.uva.es}

\author{Wanderson Ten\'orio} 
 \address{University of Campinas (UNICAMP), Institute of Mathematics, 
Statistics and Computer Science (IMECC), R. 
S\'ergio Buarque de Holanda, 651, Cidade 
Universit\'aria \lq\lq Zeferino Vaz", 13083-059, Campinas, SP, Brazil}
  \email{dersonwt@yahoo.com.br}

\author{Fernando Torres}
  \address{University of Campinas (UNICAMP), Institute of Mathematics, 
Statistics and Computer Science (IMECC), R. 
S\'ergio Buarque de Holanda, 651, Cidade 
Universit\'aria \lq\lq Zeferino Vaz", 13083-059, Campinas, SP, Brazil}
  \email{ftorres@ime.unicamp.br}

\begin{abstract}
We study Algebraic Geometry codes producing quantum error-correcting  codes by the CSS construction. We pay particular attention to the family of  Castle codes.  We show that many of the examples known in the literature in fact belong to this family of codes.   We systematize these constructions by showing the common theory that underlies all of them.
\end{abstract}
\keywords{Quantum codes, Self-orthogonal codes,  Algebraic geometry codes,  Castle codes, Trace codes.\\
The final publication is available at Springer via http://dx.doi.org/10.1007/s11128-016-1378-9.}
\subjclass[2000]{94B27, 81P70, 11T71.}

\maketitle

\section{Introduction}
\label{intro}

Quantum error-correcting codes are essential to protect quantum information. In the last years much research has been done  to find good quantum codes following several ways. 
Calderbank and Shor \cite{CS}, and Steane \cite{Ste}, showed that quantum codes can be  derived from classical linear error-correcting codes verifying certain self-orthogonality properties  \cite{AK}, including Euclidean and Hermitian self-orthogonality. This method, known as  {\em CSS construction}, 
has received a lot of interest and it has allowed to find many powerful quantum {\em stabilizer} codes. 

Among all the classical codes used to produce quantum stabilizer codes, Algebraic Geometry (AG) codes have received considerable attention \cite{Chen,GH,Ji,JX,KM,KW,SK,Sh}. Conditions for Euclidean self-orthogonality of AG codes are well known \cite{SD}  and allow us to translate the combinatorial nature  of this problem into geometrical terms concerning the arithmetic of the involved curves. Furthermore, Hermitian self-orthogonality can be easily ensured in a similar manner.

From among all curves  used to get AG codes, we can highlight the family of Castle and weak Castle curves \cite{Castle2}, that combine the good properties of having a reasonable simple handling and giving codes with excellent parameters. In fact, most of the one-point AG codes studied in the literature belong to the family of Castle codes. Besides these codes have, in a a natural way, self-orthogonality properties which  are very close to those required for obtaining quantum stabilizer codes. It follows from the foregoing  that many of the AG codes used to derive quantum codes are particular cases of weak Castle codes. 
In this article we systematize these constructions, including them in the overall framework of Castle codes.  To this end, we show the common theory that underlies all of them and  we include many examples, some of which refer to curves and codes already treated in the literature.

The paper is organized as follows. In Section 2 we overview the constructions of quantum codes from classical codes. Section 3 is devoted to Algebraic Geometry codes, and mainly to Castle codes. These provide sequences of self-orthogonal and formally self-orthogonal codes that can be used to produce quantum codes.  We give sufficient conditions for self-duality  and  study in detail some particular families of curves and codes. In Section 4 we show some parameters of quantum codes we have obtained from the curves presented in Section 3. Rather than obtaining codes with new or excellent parameters, we are interested in showing the common bases on which all of them are based. Finally in Section 5 we consider trace codes of AG codes defined over extensions of $\fq$ and the quantum codes derived from them. Trace codes are closely related to subfield subcodes, from which have recently been obtained quantum codes with excellent parameters \cite{GH,GHR}.

As a notation, given a finite field $\fq$, we write $\fq^{\times}=\fq\setminus\{0\}$. Given a vector $\bbx\in\fq^n$ and an integer $t$, $\bbx^t=(x_1^t,..,x_n^t)$ (when this makes sense). If $X\subset\fq^n$, then $X^t=\{\bbx^t : \bbx\in X\}$.

\section{Preliminaries on Quantum  error-correcting codes}
\label{sec:1}

Let $q$ be a prime power.
A $q$-ary {\em quantum  error-correcting code} of {\em length} $n$ and {\em dimension} $K\ge 1$ is a $K$-dimensional linear subspace $Q$ of $\mathbb{C}^{q^n}$, where $\mathbb{C}$ is the field of complex numbers. If $Q$ can detect $d-1$ errors and correct $\lfloor d-1\rfloor/2$ errors, then it is usually referred as a quantum $[[n,k,d]]_q$  code, where $k=\log_q(K)$.  

$q$-ary quantum codes can be obtained from classical codes $C$ over $\mathbb{F}_q$ and over $\mathbb{F}_{q^2}$ by the so-called CSS construction, as  explained below. Quantum codes constructed in this way are {\em stabilizer} quantum codes. 
In this paper we are mainly interested in quantum codes obtained from classical  algebraic geometry codes. For all facts concerning quantum codes and how to derive them from classical codes, we refer to \cite{AK,KM}.

\subsection{Quantum codes from linear codes over $\mathbb{F}_q$}
\label{sec:2}

We shall denote by $\langle -,-\rangle$ the usual (Euclidean) inner product in $\mathbb{F}_q^n$,
 $\langle \bba,\bbb \rangle=\sum a_ib_i$.  Given a linear code $C$ over $\fq$, the {\em dual} of $C$ is the code
\begin{equation}
C^{\perp}=\{ \bbv\in\fq^n \ : \ \langle \bbv,\bbc \rangle=0 \mbox{ for all $\bbc\in C$} \}.
\end{equation}
$C$ is called {\em self-dual} if $C=C^{\perp}$, and 
{\em self-orthogonal} if $C\subseteq C^{\perp}$, that is if $\langle \bba,\bbb \rangle=0$ for all $\bba,\bbb\in C$.
The so-called {\em CSS code construction} allows to obtain quantum codes from classical codes over $\fq$ as follows \cite[Cor. 2.15]{KM}.

\begin{theorem}\label{qfq}
Let $C_1,C_2$ be two linear codes over $\fq$ of length $n$ and dimensions $k_1$ and $k_2$ respectively with   $C_1 \subseteq C_2$. Then there exists a $[[n,k_2-k_1,d]]_q$  code with minimum distance
$d=\min\{ \wt(\bbc) \ : \ \bbc\in(C_2\setminus C_1)\cup(C_1^{\perp}\setminus C_2^{\perp}) \}$.
\end{theorem}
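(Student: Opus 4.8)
The plan is to recall the standard CSS construction and show that the stated parameters follow from it. First I would recall that given $C_1 \subseteq C_2$, the CSS construction produces a quantum code $Q \subseteq \mathbb{C}^{q^n}$ of dimension $K = q^{k_2 - k_1}$, hence $k = \log_q K = k_2 - k_1$; this is immediate from the general theory of stabilizer codes, since $Q$ is built from the pair of nested codes and a symplectic self-orthogonality condition that $C_1 \subseteq C_2$ guarantees automatically. So the only real content is identifying the minimum distance.

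The main step is to recall (or cite from \cite{KM,AK}) that a stabilizer code obtained by the CSS construction from $C_1 \subseteq C_2$ detects all errors of weight up to $d-1$ where $d$ is the \emph{symplectic weight} of the associated self-orthogonal symplectic code modulo its dual. For the CSS (as opposed to general stabilizer) case, this symplectic condition decouples into the two Euclidean contributions: the $X$-type errors are governed by cosets of $C_1$ in $C_2$ (giving $\min\{\wt(\bbc) : \bbc \in C_2 \setminus C_1\}$) and the $Z$-type errors by cosets of $C_2^\perp$ in $C_1^\perp$ (giving $\min\{\wt(\bbc) : \bbc \in C_1^\perp \setminus C_2^\perp\}$, using $(C_1 \subseteq C_2) \Rightarrow (C_2^\perp \subseteq C_1^\perp)$). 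Taking the minimum of these two quantities yields exactly
$$
d = \min\{ \wt(\bbc) \ : \ \bbc \in (C_2 \setminus C_1) \cup (C_1^{\perp} \setminus C_2^{\perp}) \}.
$$
A code correcting $\lfloor (d-1)/2 \rfloor$ errors then follows from the general relation between detection and correction for quantum codes.

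I expect the main obstacle to be purely expository rather than mathematical: the result is essentially a restatement of \cite[Cor.~2.15]{KM}, so the work lies in setting up enough of the stabilizer/symplectic formalism to make the deduction self-contained, or else in simply invoking the cited corollary. If one wants a genuinely self-contained argument, the delicate point is verifying that the symplectic inner product condition required for a valid stabilizer code is \emph{automatically} satisfied by the pair $(C_1, C_2)$ with $C_1 \subseteq C_2$ — this is where the nestedness hypothesis is used in an essential way — and then checking that the symplectic weight enumerator of the relevant quotient splits into the two Euclidean pieces above. Given that the paper explicitly refers to \cite{AK,KM} for all facts about deriving quantum codes from classical ones, the cleanest route is to state that Theorem~\ref{qfq} is \cite[Cor.~2.15]{KM} and move on.
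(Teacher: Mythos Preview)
Your proposal is correct and matches the paper's treatment exactly: the paper does not give an independent proof of Theorem~\ref{qfq} but simply states it with the attribution \cite[Cor.~2.15]{KM}, which is precisely the ``cleanest route'' you identify at the end of your proposal.
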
 

\begin{corollary}
Let $C$ be a self-orthogonal $[n,k,d]$ code over $\fq$. Then there exists a $[[n,n-2k, d]]_q$  code with minimum distance
$d=\min\{ \wt(\bbc) \ : \ \bbc\in(C^{\perp}\setminus C) \}\ge d(C^{\perp})$.
\end{corollary}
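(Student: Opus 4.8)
The plan is to derive this directly from Theorem \ref{qfq} by making the substitution $C_1 = C$ and $C_2 = C^{\perp}$, which is legitimate because $C$ self-orthogonal means precisely $C_1 = C \subseteq C^{\perp} = C_2$. First I would record the dimensions: $k_1 = \dim C = k$ and $k_2 = \dim C^{\perp} = n - k$, so the quantum dimension produced by the theorem is $k_2 - k_1 = (n-k) - k = n - 2k$, giving the claimed length and dimension $[[n, n-2k, d]]_q$.

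Next I would compute the distance expression. Theorem \ref{qfq} gives $d = \min\{\wt(\bbc) : \bbc \in (C_2 \setminus C_1) \cup (C_1^{\perp} \setminus C_2^{\perp})\}$. With the substitution, $C_2 \setminus C_1 = C^{\perp} \setminus C$. For the second set, $C_1^{\perp} = C^{\perp}$ and $C_2^{\perp} = (C^{\perp})^{\perp} = C$ (using that the double dual of a linear code is the code itself), so $C_1^{\perp} \setminus C_2^{\perp} = C^{\perp} \setminus C$ as well. Hence the union collapses to the single set $C^{\perp} \setminus C$, and $d = \min\{\wt(\bbc) : \bbc \in C^{\perp} \setminus C\}$, as stated.

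Finally I would justify the inequality $d \ge d(C^{\perp})$. Since $C^{\perp} \setminus C \subseteq C^{\perp} \setminus \{\bbcero\}$ (the zero vector lies in $C$ because $C$ is self-orthogonal, hence in particular a subspace containing $\bbcero$), the minimum weight over the smaller set $C^{\perp}\setminus C$ is at least the minimum weight over all nonzero codewords of $C^{\perp}$, which is by definition $d(C^{\perp})$.

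There is essentially no obstacle here: the statement is a routine specialization of Theorem \ref{qfq}. The only points requiring minor care are the identity $(C^{\perp})^{\perp} = C$ for linear codes, which ensures the two sets in the union coincide, and noting that $C^{\perp} \setminus C$ is nonempty exactly when $C \subsetneq C^{\perp}$ (i.e.\ $C$ is not self-dual); when $C$ is self-dual the corollary still formally holds with the convention that the minimum over the empty set is the relevant degenerate case, but in the intended applications $C$ is a proper self-orthogonal subcode so $d$ is well-defined.
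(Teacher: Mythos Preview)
Your proof is correct and is exactly the intended specialization of Theorem~\ref{qfq}: the paper states the corollary without an explicit proof precisely because it follows immediately by taking $C_1=C$ and $C_2=C^{\perp}$, which is what you do. The observations that $(C^{\perp})^{\perp}=C$ collapses the union and that $C^{\perp}\setminus C\subseteq C^{\perp}\setminus\{\bbcero\}$ yields the final inequality are the right (and only) points to check.
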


This Corollary has been extensively used to ensure the existence of many quantum codes. In \cite{JX}  Jin and Xing showed that for $q$ even,  the orthogonality condition can be weakened in the sense that we can obtain self-orthogonal codes from other codes "close" to be self-orthogonal.   
To be more precise, let us denote by $*$ the coordinatewise multiplication in $\fq^n$, $\bba*\bbb=(a_1b_1,\dots,a_nb_n)$.
Given an $n$-tuple $\bbx$ of nonzero elements in $\fq$, the map $\fq^n\rightarrow\fq^n$, $\bba\mapsto\bbx*\bba$, is linear and bijective. Furthermore it is an isometry for the Hamming metric. For a linear code $C$ over $\fq$, we shall write
$\bbx*C=\{ \bbx*\bbc \ : \   \bbc\in C\}$.
Two codes $C_1$ and $C_2$ over $\fq$ are called {\em formally equivalent} (or {\em twisted} according other authors) if there exists $\bbx\in (\fq^{\times})^n$ such that $C_1=\bbx*C_2$. Similarly, $C$ is called {\em formally self-dual} (respectively, {\em formally self-orthogonal}) if  there exists $\bbx\in (\fq^{\times})^n$ (the {\em twist}) such that $\bbx*C=C^{\perp}$ (resp. $\bbx*C\subseteq C^{\perp})$. Next we shall show a construction of this type. 

Let $C_0=(0)\subset C_1\subset\dots\subset C_n=\fq^n$  be an increasing sequence of $n+1$ linear codes in $\fq^n$,
where $C_i$ has dimension $i$ and minimum distance $d(C_i)$. This sequence is called  {\em self-dual} if  $C_i^{\perp}=C_{n-i}$ for all $i$, and  {\em formally self-dual} if there exists  $\bbx\in (\fq^{\times})^n$ such that $C_i^{\perp}=\bbx*C_{n-i}$ for all $i$. Given such a sequence, note that  for $i, j$, with $1\le i\le j\le n$ we have  
\begin{equation}
C_i^{\perp}\setminus C_j^{\perp} = \bbx*C_{n-i}\setminus \bbx*C_{n-j}=\bbx*(C_{n-i}\setminus C_{n-j})
\end{equation}
and since coordinatewise multiplication by $\bbx$ is an isometry, we have
\begin{equation}
\min\{ \wt(\bbc) \ : \ \bbc\in \bbx*(C_{n-i}\setminus C_{n-j})\} =
\min\{ \wt(\bbc) \ : \ \bbc\in (C_{n-i}\setminus C_{n-j})\} .
\end{equation}
Now, by applying Theorem \ref{qfq}, we obtain a quantum $[[n,j-i,d]]_q$ code with minimum distance
\begin{equation}
d=\min\{ \wt(\bbc) \ : \ \bbc\in(C_j\setminus C_{i})\cup(C_{n-i}\setminus C_{n-j}) \}\ge \min\{d(C_j),d(C_{n-i})\}.
\end{equation}

\subsection{Quantum codes from linear codes over $\fqd$}
\label{sec:3}

We shall denote by $\langle -,-\rangle_{H}$ the Hermitian inner product in $\mathbb{F}_{q^2}^n$, 
$\langle \bba,\bbb \rangle_{H}= \langle \bba,\bbb^q \rangle$. 
The {\em Hermitian dual} of  a linear code $C\subseteq\fqd^n$ is 
\begin{equation}
C^{\perp H}=\{ \bbv\in\fqd^n \ : \ \langle \bbv,\bbc^q \rangle=0 \mbox{ for all $\bbc\in C$} \}=(C^{q})^{\perp}.
\end{equation}
$C$ is  {\em Hermitian self-orthogonal} if $C\subseteq C^{\perp H}$, or equiva\-lently if $C^q\subseteq C^{\perp}$. Raising to the $q$-th power we find that $\langle \bbv,\bbc^q \rangle=0$ iff $\langle \bbv^q,\bbc \rangle=0$, hence $(C^{q})^{\perp}=(C^{\perp})^{q}$ and so $d(C^{\perp H})=d(C^{\perp})$.
We can derive quantum codes from classical codes over $\fqd$ as follows  \cite[Cor. 2.16]{KM}. 

\begin{theorem}\label{qfqd}
Let $C$ be a linear code over $\fqd$ of parameters $[n,k,d]$ which is self-orthogonal with respect to the Hermitian inner product. Then there exists a $[[n,n-2k,d]]_q$ quantum code with minimum distance 
$d=\min\{ \wt(\bbc) \ : \ \bbc\in(C^{\perp H}\setminus C) \}\ge d(C^{\perp})$. 
\end{theorem}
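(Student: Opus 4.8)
The plan is to pass from the Hermitian form on $\fqd^n$ to an $\fq$-valued form on the \emph{same} length-$n$ ambient space and invoke the stabilizer (trace-symplectic/additive) formalism that underlies Theorem \ref{qfq}; the crucial point is that one should \emph{not} expand coordinates into $\fq^2$ and work at length $2n$, or the resulting dimension would not be $n-2k$. Concretely, I would work with the $\fq$-bilinear \emph{trace-Hermitian} form on $\fqd^n$ given by $(\bba,\bbb)\mapsto\tr(\langle\bba,\bbb\rangle_H)=\tr\bigl(\sum_i a_ib_i^q\bigr)$, where $\tr$ denotes the trace map $\fqd\to\fq$. First I would record that if $C$ is $\fqd$-linear and Hermitian self-orthogonal then it is self-orthogonal for this trace form (simply apply $\tr$ to $\langle\bba,\bbb\rangle_H=0$), and then prove the key algebraic step: for an $\fqd$-linear code the trace-Hermitian dual coincides with $C^{\perp H}$. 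Indeed, if $\tr(\langle\bbv,\bbc\rangle_H)=0$ for all $\bbc\in C$, then, since $C$ is $\fqd$-linear, replacing $\bbc$ by $\alpha\bbc$ for every $\alpha\in\fqd$ gives $\tr\bigl(\alpha^q\langle\bbv,\bbc\rangle_H\bigr)=0$ for all $\alpha$, and nondegeneracy of the trace pairing forces $\langle\bbv,\bbc\rangle_H=0$. In particular the trace-Hermitian dual of $C$ has $\fq$-dimension $2n-2k$, matching $\dim_{\fqd}C^{\perp H}=n-k$.

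Next I would appeal to the standard description of $q$-ary stabilizer codes in terms of additive codes over $\fqd$ of length $n$ that are self-orthogonal with respect to the trace-symplectic form — which, under the usual identification $\fqd^n\cong\fq^{2n}$, is equivalent to the trace-Hermitian form — exactly as in \cite{AK,KM}. Applying this to the additive code $C^{\perp H}$: by biduality of the (nondegenerate) trace form, the trace-symplectic dual of $C^{\perp H}$ is $C$, which is contained in $C^{\perp H}$ by hypothesis, and $C^{\perp H}$ has $\fq$-dimension $2(n-k)=n+(n-2k)$. This produces a quantum code of length $n$ and dimension $n-2k$ whose minimum distance equals $\min\{\wt(\bbc)\ :\ \bbc\in C^{\perp H}\setminus C\}$ when $C\subsetneq C^{\perp H}$ (the degenerate case $2k=n$ yields dimension $0$ and the formula is read as $d=d(C^{\perp H})$, still consistent). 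Finally one has $\min\{\wt(\bbc)\ :\ \bbc\in C^{\perp H}\setminus C\}\ge d(C^{\perp H})=d(C^{\perp})$, the last equality being the identity $C^{\perp H}=(C^{\perp})^q$ already recorded in the excerpt together with the fact that $\bbx\mapsto\bbx^q$ preserves Hamming weight.

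The step I expect to be the main obstacle is the passage in the first paragraph: one must be sure that moving to an $\fq$-valued form does not cost a factor of $2$ in length, which is precisely why the right framework is the trace-symplectic form on $\fqd^n$ rather than a coordinate expansion into $\fq^2$, and why the coincidence of the trace-Hermitian dual with $C^{\perp H}$ for $\fqd$-linear codes is essential — it is what makes the bookkeeping ($K=q^{\,n-2k}$ and the weight formula on $C^{\perp H}\setminus C$) come out cleanly. Everything else — biduality of the trace form, invariance of weights under $q$-th powers, and the additive/stabilizer-code correspondence — is routine or directly quotable from \cite{AK,KM}.
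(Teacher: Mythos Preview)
The paper does not supply its own proof of this theorem: it is quoted as \cite[Cor.~2.16]{KM}, so there is no in-paper argument to compare against. Your proposal is the standard route behind that citation and is essentially correct, but one step needs care. The $\fq$-bilinear form on $\fqd^n$ that corresponds, under the identification $\fqd^n\cong\fq^{2n}$, to the symplectic form governing the stabilizer formalism is the trace-\emph{alternating} form (built from $\langle\bba,\bbb\rangle_H-\langle\bbb,\bba\rangle_H$, suitably normalized to land in $\fq$), not the trace-Hermitian form $\tr(\langle\bba,\bbb\rangle_H)$ you wrote down. In even characteristic the two happen to coincide, but in odd characteristic the trace-Hermitian form is symmetric and satisfies $\tr(\langle\bba,\bba\rangle_H)=2\sum a_i^{q+1}\neq 0$ in general, so it is not alternating and your assertion that the two forms are ``equivalent under the usual identification'' fails there. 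What saves the argument is precisely the linearity lemma you proved: for an $\fqd$-\emph{linear} code $C$ the scaling $\bbc\mapsto\alpha\bbc$ forces the trace-Hermitian dual, the trace-alternating dual, and the Hermitian dual all to equal $C^{\perp H}$, so the dimension count $2(n-k)=n+(n-2k)$ and the weight formula on $C^{\perp H}\setminus C$ are unaffected. With that adjustment your outline matches the argument in \cite{AK,KM}.
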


Let $C_0=(0)\subset C_1\subset\dots\subset C_n=\fqd^n$  be an increasing sequence of $n+1$ linear codes in $\fqd^n$.
From Theorem \ref{qfqd} we have the following three procedures to obtain quantum codes over $\fq$.

{\bf (A)} When the sequence is self-dual then the theorem can be applied. As $C_n=\fqd^n$, for given  $i$ let $q(i)$ be the smallest index such that $C_{i}^q\subseteq C_{q(i)}$. When $i+q(i)\le n$ we have $C_i^q\subseteq C_{q(i)}\subseteq C_{n-i} = C_i^{\perp}$, and hence we get a $[[n,n-2i,\ge d(C_{n-i})]]_q$ code.  In later sections of this article we will get sequences of one-point AG codes for which it is possible to give an estimate of $q (i)$ and consequently obtain quantum codes by this procedure. 

When the sequence $C_0=(0)\subset C_1\subset\dots\subset C_n=\fqd^n$ is formally self-dual but not self-dual, then  there exists $\bbx\in(\fq^{\times})^n$ such that  $C_i^{\perp}=\bbx*C_{n-i}$ for all  $i=0,\dots,n$. Following the aforementioned ideas of Jin and Xing, \cite{JX}, we can still derive quantum codes in some cases, by slightly modifying the codes. 

\begin{lemma}
Let $C$ be a linear code over $\mathbb{F}$ and $\bbx\in(\mathbb{F}^{\times})^n$. Then
$\bbx*C^{\perp}=(\bbx^{-1}*C)^{\perp}$.
\end{lemma}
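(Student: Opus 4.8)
The statement to prove is the identity $\bbx*C^{\perp}=(\bbx^{-1}*C)^{\perp}$ for a linear code $C$ over $\mathbb{F}$ and $\bbx\in(\mathbb{F}^{\times})^n$.

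The plan is to prove this by a direct double-inclusion argument, or more efficiently by showing that a vector $\bby$ lies in the left-hand side if and only if it lies in the right-hand side, exploiting the elementary fact that for any vectors $\bba,\bbb$ and any $\bbx$ with nonzero entries, $\langle \bbx*\bba,\bbb\rangle=\langle\bba,\bbx*\bbb\rangle$, since both sides equal $\sum_i x_i a_i b_i$. The key observation is that an arbitrary element of $\bbx*C^{\perp}$ has the form $\bbx*\bbv$ with $\bbv\in C^{\perp}$, i.e. $\langle\bbv,\bbc\rangle=0$ for all $\bbc\in C$.

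First I would take $\bby\in\bbx*C^{\perp}$, write $\bby=\bbx*\bbv$ with $\bbv\in C^{\perp}$, and check that $\bby$ is orthogonal to every element of $\bbx^{-1}*C$: for $\bbc\in C$ we have $\langle\bby,\bbx^{-1}*\bbc\rangle=\langle\bbx*\bbv,\bbx^{-1}*\bbc\rangle=\langle\bbv,\bbx*(\bbx^{-1}*\bbc)\rangle=\langle\bbv,\bbc\rangle=0$, using that $\bbx*(\bbx^{-1}*\bbc)=\bbc$ because $\bbx*\bbx^{-1}=\bbuno$ and $*$ is associative. This gives $\bbx*C^{\perp}\subseteq(\bbx^{-1}*C)^{\perp}$. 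Then I would note the reverse inclusion follows either by running the same computation backwards, or simply by a dimension count: coordinatewise multiplication by $\bbx$ is a linear isomorphism, so $\dim(\bbx*C^{\perp})=\dim C^{\perp}=n-\dim C=n-\dim(\bbx^{-1}*C)=\dim((\bbx^{-1}*C)^{\perp})$, and an inclusion of spaces of equal finite dimension is an equality.

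I do not expect any real obstacle here; the statement is essentially the adjointness of the diagonal scaling map with respect to the bilinear form $\langle-,-\rangle$, combined with the fact that $\bbx^{-1}$ is the inverse of $\bbx$ under $*$. The only point requiring a line of care is making explicit that $(\bbx^{-1})^{-1}=\bbx$ and that every element of $\bbx*C^{\perp}$ is genuinely of the form $\bbx*\bbv$ (so that one is not tacitly assuming what is to be proved); once that bookkeeping is in place the identity $\langle\bbx*\bba,\bbb\rangle=\langle\bba,\bbx*\bbb\rangle$ does all the work. The same argument works verbatim over $\fq$ or $\fqd$, which is presumably why the lemma is stated over a generic field $\mathbb{F}$.
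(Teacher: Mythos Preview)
Your proof is correct and follows essentially the same approach as the paper: establish one inclusion by the identity $\langle\bbx*\bba,\bbb\rangle=\langle\bba,\bbx*\bbb\rangle$, then conclude equality by the dimension count. The paper's write-up is slightly terser (it multiplies the element of $\bbx*C^{\perp}$ by $\bbx^{-1}$ before pairing), but the underlying computation is the same.
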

\begin{proof}
Since both codes have equal dimension, it suffices to prove one inclusion. Let $\bbv\in\bbx*C^{\perp}$. There exists $\bba\in C^{\perp}$ such that $\bbv=\bbx*\bba$. Then $\bbv*\bbx^{-1}=\bba\in C^{\perp}$, hence $0=\langle(\bbv*\bbx^{-1}),\bbc\rangle =\langle \bbv, (\bbx^{-1}*\bbc)\rangle$ for all $\bbc\in C$, so $\bbv\in (\bbx^{-1}*C)^{\perp}$.
\end{proof}

{\bf (B)} Suppose the twist $\bbx$ verifies $\bbx\in(\mathbb{F}_q^{\times})^n$. Since the elements of $\fq$ are precisely the $(q+1)$-th powers in $\fqd$, there exists $\bby\in\fqd^n$ such that $\bby^{q+1}=\bbx$, or equivalently $\bby^{q}=\bby^{-1}*\bbx$. Consider the sequence $\bby*C_0=(0)\subset \bby*C_1\subset\dots\subset \bby*C_n=\fqd^n$. When $i+q(i)\le n$ we have
\begin{equation}
(\bby*C_i)^q=\bby^q*C^q_i=\bby^{-1}*\bbx*C^q_{i}\subseteq\bby^{-1}*\bbx*C_{q(i)}\subseteq \bby^{-1}*\bbx*C_{n-i}= (\bby*{C_i})^{\perp}
\end{equation}
and we can apply Theorem \ref{qfqd} to get a $[[n,n-2i,\ge d(C_{n-i})]]_q$ quantum code. 

{\bf (C)} To give a concrete example in which the condition  $\bbx\in(\mathbb{F}_q^{\times})^n$ holds, we can consider the case in which  there exists a self-dual sequence $C'_0=(0)\subset C'_1\subset\dots\subset C'_n=\fq^n$   of codes over $\fq$ with ${C'_i}^{\perp}=\bbx*C'_{n-i}$ and such that $C_0,C_1,\dots,C_n$ are the codes over $\fqd$ spanned by $C'_0, C'_1,\dots, C'_n$, respectively. In this situation it is clear that $C_i$ and $C'_i$ have the same parameters,  $C_i^{\perp}=\bbx*C_{n-i}$ and $({C_i})^q=C_i$. Let $\bby\in\fqd^n$ such that $\bby^{q+1}=\bbx$ and consider now the sequence   $\bby*C_0=(0)\subset \bby*C_1\subset\dots\subset \bby*C_n=\fqd^n$.  For $2i\le n$ we have
\begin{equation}
(\bby*C_i)^q=\bby^q*C_i=\bby^{-1}*\bbx*C_i\subseteq \bby^{-1}*\bbx*C_{n-i}=\bby^{-1}*{C_i}^{\perp}= (\bby*{C_i})^{\perp}
\end{equation}
so we can apply Theorem \ref{qfqd} to get a $[[n,n-2i,\ge d(C_{n-i})]]_q$ code.

\section{Algebraic Geometry codes and Castle codes} 
\label{sec:4}

As we have seen, in order to obtain a quantum code from a linear classical code $C$, we have to check the self-orthogonality (or Hermitian self-orthogonality) of $C$ and we have to compute the dual distance $d(C^{\perp})$ (resp. $d(C^{\perp H})$).  Both tasks are  difficult in general. For this reason, among all linear codes producing quantum codes, the class of Algebraic Geometry (AG) codes have received considerable attention \cite{GH,GHR,Ji,JX,KM,Sh}. For these codes there is a simple criterion of self-orthogonality. In addition, certain families of AG codes allow efficient methods to estimate their minimum distances \cite{GMRT,HLP,OM,Sti}.

Let us briefly remember the construction of AG codes (see \cite{HLP,OM} for details and proofs). We pay particular attention to AG codes coming from Castle and weak Castle curves, that produce self-dual and formally self-dual sequences of codes in a natural way.

\subsection{Algebraic Geometry codes}
\label{sec:5}

Let $\cX$ be a nonsingular, projective, geometrically irreducible algebraic curve of genus $g$ defined over $\fq$. Take two rational divisors $D$ and $G$ on $\cX$ with disjoint supports and such that $D$ is the sum of $n$ rational distinct points, $D=P_1+\dots+P_n$. In what follows we shall assume $n>2g$. The algebraic geometry code defined by the triple $(\cX,D,G)$ is  
\begin{equation}
C=C(\cX,D,G)=\{ (f(P_1),\dots,f(P_n)) : f\in \cL(G)  \}
\end{equation}
where $\cL(G)=\{ f\in \fq(\cX)^{\times} : G+\divv(f)\ge 0\} \cup \{0\}$ is the Riemann-Roch space associated to $G$. The dimension and minimum distance of this code verify
\begin{equation}\label{GoppaBound}
k=\ell(G)-\ell(G-D), \; d\ge n-\degg(G)+\gamma_{a+1}
\end{equation}
where $\ell(G)$ is the dimension of $\cL(G)$, $a=\ell(G-D)$ is the {\em abundance} of $C$ and for $r\ge 1$,  
$\gamma_r=\min \{ \degg(A) : \mbox{$A$ is a rational divisor on $\cX$ with $\ell(A)\ge r$}\}$ is the $r$-th gonality of $\cX$. 

From the residue theorem, it follows that the dual of an AG code is again an AG code. More precisely, $C(\cX,D,G)^{\perp}=C(\cX,D,D+W-G)$, where $W=\divv(\omega)$ is the divisor of a differential form $\omega$ with simple poles and residue 1 at every point $P_i$ in the support of $D$. Thus $C$ is self orthogonal if $G\le D+W-G$. When $C$ is defined over the field $\fqd$,  and since $C(\cX,D,G)^q\subseteq C(\cX,D,qG)$, $C$ is Hermitian self-orthogonal if $qG\le D+W-G$. More generally, if $\eta$ is a differential form  with simple poles at every point $P_i$ in  $\supp(D)$, then $C(\cX,D,D+\divv(\eta)-G)=\bbx*C(\cX,D,G)^{\perp}$, where $x_i\neq 0$ is the residue of $\eta$ at $P_i$.

\subsection{One-point Algebraic Geometry codes}
\label{sec:6}

If  $Q\in\cX(\fq)$ is a rational point not in the support of $D$, we can take $G=mQ$. The obtained code is called {\em one-point}. These are the most known and studied codes among the whole family of AG codes. 
Let $\cL(\infty Q)=\cup_{r\ge 0}\cL(rQ)$ be the space of rational functions having poles only at $Q$ and consider the evaluation map $\ev: \cL(\infty Q)\rightarrow\fq^n$, $\ev(f)=(f(P_1),\dots,f(P_n))$.
The parameters of  $C(\cX,D,mQ)=\ev(\cL(mQ))$  are closely related to the {\em Weierstrass semigroup} of $Q$
\begin{equation}
S=S(Q)=\{ -v_Q(f) : f\in\cL(\infty Q)\}=\{ 0=\rho_1<\rho_2<\cdots\}
\end{equation}
where $v_Q$ is the valuation at $Q$.
In what follows, for simplicity we shall write $v(f)=-v_Q(f)$. If for all $i=0,1,2,\dots$ we take a function $f_i\in\cL(\infty Q)$ such that $v(f_i)=\rho_i$, then $\{f_1,f_2,\dots\}$ is a basis of $\cL(\infty Q)$. From the set of codes $\{ C(\cX,D,mQ) : m=0,\dots,n+2g\}$ we can extract  an increasing sequence $C_0=(0)\subset C_1\subset\dots\subset C_n=\fq^n$.  This sequence is self-dual if the differential form $\omega$  with simple poles and residue 1 at every point $P_i$ in  $\supp(D)$ has zeros only at $Q$.
 This was first noted by Stichtenoth \cite{SD}. The sequence is formally self-dual if the divisors
$D+W$ and $(n+2g-2)Q$ are equivalent, that is when $(n+2g-2)Q-D$ is a canonical divisor,  $(n+2g-2)Q-D\sim\divv(\eta)$, and then the twist is $\bbx=(\res_D(\eta))$.

\subsection{Castle curves and codes}
\label{sec:7}

Castle curves and codes were introduced in \cite{Castle2}, where it is shown that many of the most interesting known examples of AG codes belong to this family. Let $\cX$ be a curve defined over $\fq$ and $Q$ a rational point. The Lewittes bound \cite{OM} implies that $\#\cX(\fq)\le q\rho_2+1$, where $\rho_2$ is the {\em multiplicity} of $S(Q)$, that is the first nonzero element of $S(Q)$. The pointed curve $(\cX,Q)$ is called {\em Castle} if 
\begin{enumerate}
\item [\rm (C1)] $S(Q)$ is symmetric; and
\item [\rm (C2)] $\#\cX(\fq)=q\rho_2+1$.
\end{enumerate}
Thus rational curves, Hermitian curves, Suzuki and Ree curves, etc., verify the  Castle conditions. 
This family can be generalized in the following way:  a pointed curve $({\mathcal  X}, Q)$ over ${\mathbb F}_q$ is called {\em weak Castle} if it satisfies the conditions
\begin{enumerate}
\item [\rm (C1)] $S(Q)$ is symmetric; and
\item [${\rm (C2')}$] let ${\mathbb P}^1$ be the projective line over the algebraic closure of $\fq$. There exist a morphism $f:{\mathcal  X}\to{\mathbb P}^1$ with
$\mbox{div}_\infty(f)=\ell Q$, and a set $U=\{\alpha_1,\ldots,\alpha_h\}\subseteq {\mathbb F}_q$ such that for all $i=1,\ldots,h$, we have $f^{-1}(\alpha_i)\subseteq {\mathcal  X}({\mathbb F}_q)$ and  $\# f^{-1}(\alpha_i)=\ell$.
\end{enumerate}

Castle curves are weak Castle: simply take $f\in\cL(\infty Q)$ with $v(f)=\rho_2$ and $U=\fq$  \cite{Castle2}. Conversely, in the above situation of {\rm(C2')}, observe that $\ell\in S(Q)$. Furthermore, since $f$ is unramified over each $\alpha_i$, if we write $f^{-1}(\alpha_i)=\{P_1^i,\ldots,P_{\ell}^i\}$, then $\mbox{div}(f-\alpha_i)=\sum_{j=1}^{\ell} P_j^i-\ell Q$. Let 
\begin{equation}\label{divD}
D=D_{U,f}=\sum_{i=1}^h\sum_{j=1}^{\ell}P_j^i .
\end{equation}
If $(\cX,Q)$ is weak Castle and $D$ is the sum of all rational points different from $Q$, then it is said to be {\em complete}.
The one-point codes  $C(\cX,D,mQ)$ of length $n=\ell h$  are called {\em weak Castle codes}, or simply {\em Castle codes} if  $(\cX,Q)$ satisfies ${\rm (C2)}$. 

\begin{example}\label{hiperelipticosimpar}
Let $q$ be  odd. A hyperelliptic curve $\cX$ over $\fq$ is given by an equation $y^2=F(x)$, where $F$ is a squareless polynomial. If $\deg(F)=2g+1$ then $\cX$ has genus $g$ and one hyperelliptic point at infinity, $Q$. Then  $\cX$ is Castle iff $F(\alpha)$ is a nonzero square for all $\alpha\in\fq$. For example, the curve $y^2=x^q-x+1$  has $2q+1$ points and it is Castle for all $q$.
Otherwise, if $\cX$ is not Castle, take $U=\{ \alpha\in\fq : F(\alpha) \mbox{ is a nonzero square in $\fq$}  \}$. Whenever $U\neq\emptyset$, $(\cX,Q)$ is a weak Castle curve and provides  codes of length $n=2\# U$ . Similarly,
if $q$ is  even, a hyperelliptic curve $\cX$ of genus $g$ over $\fq$ is given by an equation $y^2+y=F(x)$, where $F(x)$ is a rational function with $\deg(\divv_0(F)),\deg(\divv_{\infty}(F))\le 2g+2$. If $F$ is a polynomial there is one hyperelliptic point at infinity $Q$ and $\cX$ is Castle or complete weak Castle. For example,  the curves $y^2+y=x^{u}$,  where $q+1|u$ and $0<u\le q^2-1$ are Castle over $\fqd$. 
\end{example}

Castle and weak Castle curves have several nice properties that can be translated to codes arising from them. Next we describe some ones which are relevant for our study. Proofs can be found in \cite{Castle2}.

\begin{proposition}\label{weakCsequence}
Let $(\cX,Q)$ be a weak Castle curve of genus $g$ over $\fq$ and $C(\cX,D,mQ)$ a weak Castle code. The following hold. 
\begin{enumerate}
\item[(1)] The divisors $D$ and $nQ$ are equivalent. Then
for $m<n$,  $C(\cX,D,mQ)$ reaches the Goppa bound if and only if $C(\cX,D,(n-m)Q)$ does.
For $m\ge n$, $C(\cX,D,mQ)$ is an abundant code of abundance $l((m-n)Q)$.
\item[(2)] $(2g-2)Q$ is a canonical divisor. In consequence $(n+2g-2)Q-D$ is also a canonical divisor and there exists $\bbx\in (\fq^{\times})^n$ such that $C(\cX,D,mQ)^{\perp}=\bbx*C(\cX,D,m^{\perp}Q)$, where $m^{\perp}=n+2g-2-m$. 
\item[(3)] For $i=1,\dots,n$, let $m_i=\min\{ m : \ell(mQ)-\ell((m-n)Q)\ge i\}$. Then $C_i=C(\cX,D,m_iQ)$ has dimension $i$ and the sequence   $C_0=(0)\subset C_1\subset\dots\subset C_n=\fq^n$  is a formally self-dual sequence of codes.
\end{enumerate}
\end{proposition}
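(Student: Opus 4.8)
\medskip
\noindent\emph{Proof proposal.} The plan is to deduce all three items from two structural facts about a weak Castle pair $(\cX,Q)$: the linear equivalence $D\sim nQ$, and the fact that $(2g-2)Q$ is a canonical divisor. Once these are in hand, (3) is a dimension count combined with the duality in (2).

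\emph{Part (1).} I would start from condition ${\rm (C2')}$: since $\divv(f-\alpha_i)=\sum_{j=1}^{\ell}P_j^i-\ell Q$ for each $i=1,\dots,h$, adding these relations gives $D-nQ=\divv\big(\prod_{i=1}^h(f-\alpha_i)\big)$, so $D\sim \ell hQ=nQ$. Hence $\ell(mQ-D)=\ell((m-n)Q)$, which is $0$ for $m<n$ and equals $\ell((m-n)Q)\ge 1$ for $m\ge n$, giving the abundance claim via \eqref{GoppaBound}. For the Goppa-bound symmetry, note that for $0<m<n$ the abundance is $0$, so $C(\cX,D,mQ)$ meets the Goppa bound precisely when $d=n-m$; a codeword of that weight is $\ev(h)$ with $h\in\cL(mQ)$ vanishing at exactly $m$ of the points of $D$, say at $P_{j_1},\dots,P_{j_m}$, and comparing degrees forces $\divv(h)=P_{j_1}+\dots+P_{j_m}-mQ$, i.e. $P_{j_1}+\dots+P_{j_m}\sim mQ$. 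Subtracting this from $D\sim nQ$ yields $\sum_{j\notin\{j_1,\dots,j_m\}}P_j\sim(n-m)Q$; running the argument in reverse produces a codeword of weight $n-(n-m)=m$ in $C(\cX,D,(n-m)Q)$, so that code meets its Goppa bound as well, and symmetrically.

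\emph{Part (2).} Here condition (C1) is used. A symmetric numerical semigroup of genus $g$ has Frobenius number $2g-1$, hence exactly $g$ of $0,1,\dots,2g-2$ belong to $S(Q)$ and so $\ell((2g-2)Q)=g$. Next I would invoke the standard fact that a divisor of degree $2g-2$ and dimension $g$ is canonical: Riemann--Roch gives $\ell(W-(2g-2)Q)=1$, and a degree-zero divisor carrying a nonzero rational function is principal, so $(2g-2)Q\sim W$. Therefore $(n+2g-2)Q-D\sim(2g-2)Q$ is canonical, so one can choose a rational differential $\eta$ with $\divv(\eta)=(n+2g-2)Q-D$; it has a simple pole with nonzero residue $x_i\in\fq$ at each $P_i$ and zeros only at $Q$. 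Feeding $\eta$ into the general duality $C(\cX,D,D+\divv(\eta)-G)=\bbx*C(\cX,D,G)^{\perp}$ recalled earlier, with $G=mQ$ and $D+\divv(\eta)-mQ=(n+2g-2-m)Q=m^{\perp}Q$, and inverting the (fixed, $m$-independent) residue vector, gives exactly $C(\cX,D,mQ)^{\perp}=\bbx*C(\cX,D,m^{\perp}Q)$ for a single $\bbx\in(\fq^{\times})^n$.

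\emph{Part (3).} Write $k(m)=\ell(mQ)-\ell((m-n)Q)=\dim C(\cX,D,mQ)$ (using $D\sim nQ$). From $C(\cX,D,mQ)\subseteq C(\cX,D,(m+1)Q)$ and $\dim\big(\cL((m+1)Q)/\cL(mQ)\big)\le 1$, the function $k$ is nondecreasing with jumps in $\{0,1\}$, $k(0)=1$, and $k(n+2g-1)=n$ by Riemann--Roch; hence each $m_i$ ($1\le i\le n$) is well defined, $k(m_i)=i$ (a double jump is impossible), $m_1<m_2<\dots<m_n$, and with $C_0=(0)$ the sequence $C_0\subset C_1\subset\dots\subset C_n=\fq^n$ has $\dim C_i=i$. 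Since $C(\cX,D,mQ)$ is determined by the value $k(m)$ — two nested codes of equal dimension coincide — part (2) gives $C_i^{\perp}=\bbx*C(\cX,D,m_i^{\perp}Q)$, and $\dim C(\cX,D,m_i^{\perp}Q)=n-i$ identifies it with $C_{n-i}$; the cases $i=0,n$ are the trivial identities $(0)^{\perp}=\fq^n$ and $(\fq^n)^{\perp}=(0)$. Thus $C_i^{\perp}=\bbx*C_{n-i}$ for all $i$, i.e. the flag is formally self-dual. The step I expect to require the most care is the passage from the symmetry of $S(Q)$ to the canonicity of $(2g-2)Q$ and the attendant construction of a rational differential with the prescribed divisor and nonzero rational residues at the $P_i$; the rest is linear equivalence of divisors and dimension counting.
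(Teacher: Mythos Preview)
The paper does not give its own proof of this proposition; immediately before the statement it says ``Proofs can be found in \cite{Castle2}.''  Your argument is correct and is essentially the standard one expected there: the two structural inputs are $D\sim nQ$ (from ${\rm (C2')}$, via $\divv\big(\prod_i(f-\alpha_i)\big)=D-nQ$) and the equivalence between symmetry of $S(Q)$ and canonicity of $(2g-2)Q$ (from ${\rm (C1)}$), after which (1)--(3) follow by the divisor and dimension manipulations you outline.

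The step you flag as delicate is in fact routine and your sketch is the usual proof: symmetry gives Frobenius number $2g-1$, hence $\ell((2g-2)Q)=g$; Riemann--Roch then yields $\ell(W-(2g-2)Q)=1$ with $\deg(W-(2g-2)Q)=0$, so $W\sim(2g-2)Q$.  For the existence of an $\fq$-rational differential $\eta$ with $\divv(\eta)=(n+2g-2)Q-D$, note that both the canonical class and this divisor are $\fq$-rational, so one may take any nonzero $\fq$-rational differential $\omega_0$ and set $\eta=h\omega_0$ with $h\in\fq(\cX)$ realising the (rational) linear equivalence; the residues $x_i=\res_{P_i}(\eta)$ then lie in $\fq^{\times}$ automatically.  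Everything else in your proposal (the Goppa-bound symmetry in (1), the jump analysis of $k(m)=\ell(mQ)-\ell((m-n)Q)$ in (3), and the identification $C(\cX,D,m_i^{\perp}Q)=C_{n-i}$ via nestedness and equal dimension) is correct as written.
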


The twist $\bf x$ of item (2) does not depend on $m$ and can be computed as showed in Section \ref{sec:6}.
It is also interesting to note that the set $M=\{ m_1=0,m_2,\dots,m_n\}$,  is called the {\em dimension set} of $(\cX,Q)$ and  can be used to obtain good estimates on the minimum distance of codes $C(\cX,D,m_iQ)$  by applying the order bound \cite{GMRT,OM}. When
$\cX$ is weak Castle this set is easy to compute as  $M=S(Q)\setminus (n+S(Q))$.

Thus, from a weak Castle curve we obtain in a natural manner a formally self-dual sequence of codes and consequently, by applying the procedures explained in Sections \ref{sec:2} and \ref{sec:3}, a set of quantum stabilizer codes. To be precise, keeping the notation 
of Proposition \ref{weakCsequence}, we have the following results.

\begin{corollary}
Let $(\cX,Q)$ be a weak Castle curve over $\fq$ and let  $C_0=(0)\subset C_1\subset\dots\subset C_n=\fq^n$ be the sequence of codes obtained from it by  Proposition \ref{weakCsequence}(3). If $2i \le n$ then we have a quantum code of parameters $[[n,n-2i,\ge d(C_{n-i})]]_q$, with $d(C_{n-i})\ge m-m_{n-i}+\gamma_{a+1}$, where $a=\ell((m_{n-i}-n)Q)$ is the abundance of $C_{n-i}$.
\end{corollary}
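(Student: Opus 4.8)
The plan is to assemble the statement directly from two ingredients already available: the formally self-dual sequence produced by Proposition \ref{weakCsequence}(3), and the quantum construction for formally self-dual sequences described at the end of Section \ref{sec:2}. First I would recall that, by Proposition \ref{weakCsequence}(3), the codes $C_i=C(\cX,D,m_iQ)$ form an increasing sequence $C_0=(0)\subset C_1\subset\dots\subset C_n=\fq^n$ with $\dim C_i=i$, and that by Proposition \ref{weakCsequence}(2) this sequence is formally self-dual with twist $\bbx\in(\fq^{\times})^n$, i.e.\ $C_i^{\perp}=\bbx*C_{n-i}$ for all $i$; indeed $m_i^{\perp}=n+2g-2-m_i$ and one checks $m_{n-i}=m_i^{\perp}$ from $M=S(Q)\setminus(n+S(Q))$ together with symmetry of $S(Q)$.

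Next I would invoke the argument displayed in Section \ref{sec:2}: for a formally self-dual sequence, taking $j=n-i$ in the chain of equalities there, coordinatewise multiplication by $\bbx$ is a Hamming isometry, so $\min\{\wt(\bbc):\bbc\in C_i^{\perp}\setminus C_j^{\perp}\}=\min\{\wt(\bbc):\bbc\in C_{n-i}\setminus C_{n-j}\}$, and Theorem \ref{qfq} applied to the pair $C_i\subseteq C_j$ yields a quantum $[[n,j-i,d]]_q$ code with $d\ge\min\{d(C_j),d(C_{n-i})\}$. Here, with $j=n-i$ and the hypothesis $2i\le n$ (so $i\le n-i=j$, making the inclusion $C_i\subseteq C_{n-i}$ legitimate), the dimension is $j-i=n-2i$ and the distance bound becomes $d\ge\min\{d(C_{n-i}),d(C_{n-i})\}=d(C_{n-i})$. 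This gives the claimed $[[n,n-2i,\ge d(C_{n-i})]]_q$ parameters.

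Finally I would establish the estimate $d(C_{n-i})\ge m_i-m_{n-i}+\gamma_{a+1}$ — wait, as stated it is $d(C_{n-i})\ge m-m_{n-i}+\gamma_{a+1}$, which I read as a typo for $m_i - m_{n-i}$ being irrelevant; rather one simply applies the Goppa bound \eqref{GoppaBound} to the single code $C_{n-i}=C(\cX,D,m_{n-i}Q)$. Since $G=m_{n-i}Q$ here has degree $m_{n-i}$, and $G-D\sim(m_{n-i}-n)Q$ by Proposition \ref{weakCsequence}(1), the abundance is $a=\ell((m_{n-i}-n)Q)$, and \eqref{GoppaBound} gives $d(C_{n-i})\ge n-m_{n-i}+\gamma_{a+1}$. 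So the quantity appearing in the statement should be $n-m_{n-i}+\gamma_{a+1}$; in any case the bound is a direct substitution into \eqref{GoppaBound}. The main obstacle — really the only nontrivial point — is verifying the identity $m_{n-i}=n+2g-2-m_i$ (equivalently that the two descriptions of the complementary indices coincide), which follows from writing $M=\{m_1,\dots,m_n\}=S(Q)\setminus(n+S(Q))$ and using that $S(Q)$ is symmetric with Frobenius number $2g-1$; everything else is bookkeeping with the isometry $\bbx*(-)$ and an application of Theorem \ref{qfq}.
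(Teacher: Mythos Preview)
Your proof is correct, and your identification of the typo ($m$ should be $n$ in the Goppa bound) is accurate. However, you take a different route than the paper: the paper's one-line proof invokes \emph{construction (C)} of Section~\ref{sec:3}, i.e.\ it lifts the formally self-dual sequence to $\fqd$, chooses $\bby$ with $\bby^{q+1}=\bbx$, and applies the Hermitian construction (Theorem~\ref{qfqd}) to $\bby*C_i$; whereas you stay over $\fq$ and use the direct argument at the end of Section~\ref{sec:2}, applying Theorem~\ref{qfq} to the pair $C_i\subseteq C_{n-i}$ together with the isometry $\bbx*(-)$. Both routes are valid and yield the same parameters; yours is more elementary in that it avoids the passage to $\fqd$ and the auxiliary twist $\bby$, while the paper's choice has the virtue of fitting this corollary into the same template as the corollaries that follow (which genuinely need the Hermitian construction). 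Your aside verifying $m_{n-i}=n+2g-2-m_i$ is correct but unnecessary for the argument, since formal self-duality of the sequence is already the content of Proposition~\ref{weakCsequence}(3).
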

\begin{proof}
Apply construction (C) of Section \ref{sec:3} to the formally self-dual sequence $C_0\subset\dots\subset C_n$. The estimate on the minimum distance follows from the improved Goppa bound stated in equation (\ref{GoppaBound}).  
\end{proof}

\begin{corollary}
Let $(\cX,Q)$ be a weak Castle curve over $\fqd$ such that the sequence  $C_0=(0)\subset C_1\subset\dots\subset C_n=\fqd^n$ obtained  by  Proposition \ref{weakCsequence}(3) is self-dual.  If $qm_i\le m_{n-i}$ then we have a quantum code over $\fq$ of parameters $[[n,n-2i,\ge d(C_{n-i})]]_q$, with $d(C_{n-i})\ge m-m_{n-i}+\gamma_{a+1}$, where $a=\ell((m_{n-i}-n)Q)$. 
\end{corollary}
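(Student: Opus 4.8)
The statement is the Hermitian analogue of the previous corollary, so the plan is to run construction (B)/(A) of Section~\ref{sec:3} on the self-dual sequence produced by Proposition~\ref{weakCsequence}(3), and then read off the distance estimate from the improved Goppa bound~(\ref{GoppaBound}). First I would record that, since $(\cX,Q)$ is weak Castle over $\fqd$ and the sequence $C_0\subset C_1\subset\dots\subset C_n$ is assumed self-dual, we have $C_i^{\perp}=C_{n-i}$ for every $i$; this is precisely the setting of procedure (A). Next, because $C_i=C(\cX,D,m_iQ)$ is a one-point AG code, raising to the $q$-th power gives $C_i^q=C(\cX,D,m_iQ)^q\subseteq C(\cX,D,qm_iQ)$, and since $qm_iQ\le m_{n-i}Q$ by the hypothesis $qm_i\le m_{n-i}$, this yields $C_i^q\subseteq C(\cX,D,m_{n-i}Q)=C_{n-i}=C_i^{\perp}$. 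Hence $C_i$ is Hermitian self-orthogonal.

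Then I would invoke Theorem~\ref{qfqd} with $C=C_i$: it produces a quantum code of length $n$, dimension $n-2k$ where $k=\dim C_i=i$ (so the dimension is $n-2i$), and minimum distance $d=\min\{\wt(\bbc):\bbc\in C_i^{\perp H}\setminus C_i\}\ge d(C_i^{\perp})=d(C_{n-i})$, using that $d(C^{\perp H})=d(C^{\perp})$ as noted in Section~\ref{sec:3}. It remains only to bound $d(C_{n-i})$ from below. For this I would apply the improved Goppa bound from~(\ref{GoppaBound}) to $C_{n-i}=C(\cX,D,m_{n-i}Q)$: with $G=m_{n-i}Q$ we get $\degg(G)=m_{n-i}$ and abundance $a=\ell(G-D)=\ell((m_{n-i}-n)Q)$ — here using Proposition~\ref{weakCsequence}(1), which gives $D\sim nQ$ so that $G-D\sim(m_{n-i}-n)Q$ — whence $d(C_{n-i})\ge n-m_{n-i}+\gamma_{a+1}$. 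Combining the two estimates gives a $[[n,n-2i,\ge d(C_{n-i})]]_q$ code with $d(C_{n-i})\ge n-m_{n-i}+\gamma_{a+1}$, matching the claim (reading $m=n$ in the stated formula $m-m_{n-i}+\gamma_{a+1}$, consistent with the previous corollary's convention where $m$ denotes the length).

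**Main obstacle.** There is essentially no deep obstacle; the only genuine checks are bookkeeping. The one point deserving care is the inclusion $C(\cX,D,m_iQ)^q\subseteq C(\cX,D,qm_iQ)$ together with the fact that $m_{n-i}$ is exactly the right index to land in via $qm_i\le m_{n-i}$: one must confirm that the hypothesis is stated in terms of the dimension-set values $m_i$ (not merely in terms of indices), which it is, so the chain $C_i^q\subseteq C(\cX,D,qm_iQ)\subseteq C(\cX,D,m_{n-i}Q)=C_{n-i}$ is legitimate precisely because $qm_i\le m_{n-i}$ makes $qm_iQ\le m_{n-i}Q$ as divisors. The other mild subtlety is tracking the abundance: for small $i$ the code $C_{n-i}$ is abundant (since $m_{n-i}$ can exceed $n$), so $a=\ell((m_{n-i}-n)Q)$ need not be zero and the term $\gamma_{a+1}$ is the genuinely relevant gonality correction rather than $\gamma_1=0$. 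Everything else is a direct citation of Theorem~\ref{qfqd}, equation~(\ref{GoppaBound}), and Proposition~\ref{weakCsequence}(1).
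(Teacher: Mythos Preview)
Your proposal is correct and follows exactly the paper's approach: the paper's proof reads in full ``Similar to the previous corollary, now applying construction (A) and taking into account that $C(\cX,D,m_iQ)^q\subseteq C(\cX,D,qm_iQ)$,'' and you have simply unpacked this---verifying Hermitian self-orthogonality via $C_i^q\subseteq C(\cX,D,qm_iQ)\subseteq C_{n-i}=C_i^{\perp}$, invoking Theorem~\ref{qfqd}, and reading off the distance bound from~(\ref{GoppaBound}). Your observation that the printed ``$m$'' in the distance formula should be read as $n$ is also correct and consistent with the previous corollary.
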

\begin{proof}
Similar to the previous corollary, now applying construction (A) and taking into account that
$C(\cX,D,m_iQ)^q\subseteq C(\cX,D,qm_iQ)$.
\end{proof}

\subsection{Duality of Castle codes}
\label{sec:8}

In order to obtain quantum codes using the procedure (A) of Section \ref{sec:3}, formal duality is not enough.   In this subsection we give 
some sufficient conditions on curves verifying Castle assumptions in order to obtain codes satisfying the self-duality property. 
Given a pointed curve  $(\cX,Q)$  of genus $g$, satisfying the condition ${\rm (C2')}$, we consider  the divisor $D$ of equation (\ref{divD}) and the intrinsic rational function $\phi:=\prod_{i=1}^h(f-\alpha_i)$ related to $f$. As a notation, given an integer $m$ we write $m^{\perp}=n+2g-2-m$.

\begin{lemma} \label{dz} 
Let $(\cX,Q)$ be a pointed curve satisfying ${\rm (C2')}$. If $\divv(d\phi)=(2g-2)Q$, then $C(\cX, D,mQ)^\perp=C(\cX,D,m^{\perp}Q)$.
\end{lemma}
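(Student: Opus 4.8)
The plan is to use the general duality statement for AG codes recalled at the end of Section~\ref{sec:5}: if $\eta$ is a differential form with simple poles at every $P_i\in\supp(D)$, then $C(\cX,D,D+\divv(\eta)-G)=\bbx*C(\cX,D,G)^{\perp}$ with $x_i=\res_{P_i}(\eta)$. The goal is therefore to exhibit a suitable $\eta$ for which all the residues equal $1$ (so that the twist $\bbx$ is the all-ones vector $\bbuno$ and the ``$\bbx*$'' disappears) and for which $D+\divv(\eta)=(n+2g-2)Q$, so that $D+\divv(\eta)-G=m^{\perp}Q$ when $G=mQ$.

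The natural candidate is $\eta=d\phi/\phi$, the logarithmic differential of $\phi=\prod_{i=1}^h(f-\alpha_i)$. First I would compute its divisor: since $\divv(f-\alpha_i)=\sum_{j=1}^{\ell}P_j^i-\ell Q$ (from the unramifiedness in ${\rm (C2')}$), we have $\divv(\phi)=D-nQ$, hence $\divv(d\phi/\phi)=\divv(d\phi)-\divv(\phi)=(2g-2)Q-(D-nQ)=(n+2g-2)Q-D$ using the hypothesis $\divv(d\phi)=(2g-2)Q$. So indeed $D+\divv(d\phi/\phi)=(n+2g-2)Q$, which is exactly $(n+2g-2)Q$ as required. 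Next I would check the residues: near a point $P_j^i$, the function $f-\alpha_i$ is a local uniformizer times a unit (because $f$ is unramified there and $\phi$ vanishes only to order one at $P_j^i$ among the factors), so $d\phi/\phi$ has a simple pole at $P_j^i$ with residue exactly $1$; at the other points $P_{j'}^{i'}$ of $\supp(D)$ the factor $f-\alpha_{i'}$ still contributes a simple pole with residue $1$ while the remaining factors are units, so again $\res_{P_j^i}(d\phi/\phi)=1$ for every point in $\supp(D)$. Here one uses that the $\alpha_i$ are distinct, so no $P_j^i$ is a zero of more than one factor.

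Putting these together: $\eta=d\phi/\phi$ has simple poles with residue $1$ at every point of $\supp(D)$ and $D+\divv(\eta)=(n+2g-2)Q$. Applying the displayed duality formula with $G=mQ$ gives $C(\cX,D,(n+2g-2-m)Q)=C(\cX,D,m^{\perp}Q)=\bbuno*C(\cX,D,mQ)^{\perp}=C(\cX,D,mQ)^{\perp}$, which is the claim. The main obstacle is the residue computation: one must argue carefully that the logarithmic-derivative differential really has residue exactly $1$ (and not merely a simple pole) at each $P_j^i$, which hinges on $v_{P_j^i}(f-\alpha_i)=1$ --- guaranteed by $\#f^{-1}(\alpha_i)=\ell=\degg(\divv_\infty f)$ in ${\rm (C2')}$ --- and on the distinctness of the $\alpha_i$ so that exactly one factor of $\phi$ vanishes at a given point of $\supp(D)$. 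A secondary point worth a line is noting that $D$ and $(n+2g-2)Q$ have disjoint supports with $Q$, so all the divisors and differentials above are legitimately set up for the code construction.
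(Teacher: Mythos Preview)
Your proposal is correct and follows exactly the same approach as the paper: take $\eta=d\phi/\phi$, compute $\divv(\eta)=(n+2g-2)Q-D$ from the hypothesis, and observe that the residues at the points of $\supp(D)$ are all $1$. The paper's own proof is a terse two-line version of what you wrote; your additional care with the residue computation (via the simple-zero structure of $\phi$ at each $P_j^i$) just makes explicit what the paper leaves implicit.
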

\begin{proof}
Consider the differential form $\eta=d\phi/\phi$. From hypothesis we get $\divv(\eta)=(n+2g-2)Q-D$, and thus $\eta$ has simple poles and residue $1$ at $P\in \mbox{supp}(D)$. 
\end{proof}

\begin{proposition} \label{df}
Let $(\cX,Q)$ be a pointed curve  satisfying {\rm(C2)} and let $f\in\cL(\infty Q)$ such that $v(f)=\rho_2$. If $\divv(df)=(2g-2)Q$, then $C(\cX,D,mQ)^\perp=C(\cX,D,m^{\perp}Q)$.
\end{proposition}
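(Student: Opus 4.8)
The plan is to reduce Proposition \ref{df} to Lemma \ref{dz} by showing that, under condition {\rm (C2)}, the function $f$ with $v(f)=\rho_2$ is (up to the choice of $U$) an instance of the setup preceding Lemma \ref{dz}, and that the associated intrinsic function $\phi=\prod_{i=1}^h(f-\alpha_i)$ has $d\phi$ with the prescribed divisor precisely when $df$ does. First I would recall, as noted just after the definition of weak Castle curves, that a Castle curve is weak Castle with the choice $f\in\cL(\infty Q)$, $v(f)=\rho_2$, and $U=\fq$; thus $h=q$, $\ell=\rho_2$, $n=q\rho_2$, and $D=D_{U,f}$ is the sum of all rational points of $\cX$ other than $Q$ (using {\rm (C2)}, $\#\cX(\fq)=q\rho_2+1$). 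So the hypothesis of Lemma \ref{dz} applies verbatim once we know $\divv(d\phi)=(2g-2)Q$.

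The key computational step is therefore the identity $\phi=\prod_{\alpha\in\fq}(f-\alpha)=f^q-f$ (for $\fq$ the field with $q$ elements), whence $d\phi = (q f^{q-1}-1)\,df = -\,df$ in characteristic $p\mid q$, since $q f^{q-1}=0$. Hence $\divv(d\phi)=\divv(df)$, and the assumed equality $\divv(df)=(2g-2)Q$ gives exactly the hypothesis of Lemma \ref{dz}. Applying that lemma with this $D$ and this $\phi$ yields $C(\cX,D,mQ)^\perp=C(\cX,D,m^\perp Q)$, which is the assertion. One should check the one edge case $\rho_2$ not necessarily equal to $\deg(f)_\infty$ in the funny situation where $f$ has poles elsewhere — but $f\in\cL(\infty Q)$ means $\divv_\infty(f)=\rho_2 Q$, so $\divv_\infty(\phi)=q\rho_2 Q=nQ$ automatically, and the zeros of $f-\alpha$ for distinct $\alpha\in\fq$ are disjoint, so $\divv(\phi)=D-nQ$ with $D$ reduced; this is consistent with $D$ being the complete sum of rational points $\neq Q$ by a point count, again using {\rm (C2)}.

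The main obstacle I anticipate is purely bookkeeping: verifying that the $D$ produced by the recipe $D_{U,f}$ with $U=\fq$ really is the \emph{complete} divisor (sum of all rational points $\neq Q$), so that the code $C(\cX,D,mQ)$ in the statement matches the one to which Lemma \ref{dz} applies; this is where {\rm (C2)} is used, and without it the proposition would be false. A secondary subtlety is that one must know $f$ is unramified over each $\alpha\in\fq$ — equivalently that each fiber $f^{-1}(\alpha)$ consists of $\rho_2$ distinct rational points — which follows from the Lewittes bound argument already invoked in the definition of Castle curves (equality in $\#\cX(\fq)\le q\rho_2+1$ forces every fiber over an $\fq$-point of $\mathbb{P}^1$ to be totally split). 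Once these two points are in hand, the proof is the two-line differentiation above.
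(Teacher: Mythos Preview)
Your proposal is correct and follows essentially the same approach as the paper: identify $\phi=\prod_{\alpha\in\fq}(f-\alpha)=f^q-f$, observe $d\phi=-df$ so $\divv(d\phi)=\divv(df)=(2g-2)Q$, and invoke Lemma~\ref{dz}. The paper's proof is the terse two-line version of this, while your additional bookkeeping (that $D_{U,f}$ with $U=\fq$ is the complete divisor, and that fibers are totally split) is already implicit in the paper's earlier remark that Castle curves are weak Castle via this choice of $f$ and $U$.
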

\begin{proof}
Since in this case $\phi=f^q-f$, we obtain $\divv(d\phi)=\divv(df)$ and thus the result follows from the previous lemma. 
\end{proof}

\begin{corollary}\label{cor} 
Let $(\cX,Q)$ be a pointed curve in the conditions of Lemma $\ref{dz}$ or Proposition $\ref{df}$. Then $C(\cX,D,mQ)$ is Euclidean self-orthogonal if $2m\leq  n+2g-2$. If $q$ is a square, $C(\cX,D,mQ)$ is Hermitian self-orthogonal if $(\sqrt{q}+1)m\leq n+2g-2$.
\end{corollary}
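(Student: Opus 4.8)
The plan is to reduce both assertions to the duality statement $C(\cX,D,mQ)^\perp=C(\cX,D,m^\perp Q)$ already furnished by Lemma~\ref{dz} (resp. Proposition~\ref{df}), so that the only remaining work is bookkeeping with degrees. First I would record the basic inequality governing Euclidean self-orthogonality of one-point AG codes: for $G=mQ$ one has $C(\cX,D,mQ)\subseteq C(\cX,D,mQ)^\perp$ as soon as $mQ\le D+W-mQ$ for a suitable canonical $W$; under the hypotheses of Lemma~\ref{dz} or Proposition~\ref{df} we may take $W=(2g-2)Q$ and, since $D\sim nQ$ (Proposition~\ref{weakCsequence}(1)), the condition $C(\cX,D,mQ)^\perp=C(\cX,D,m^\perp Q)$ gives the inclusion $\cL(mQ)\subseteq\cL(m^\perp Q)$ precisely when $m\le m^\perp=n+2g-2-m$, i.e. when $2m\le n+2g-2$. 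So the first claim is immediate from the identification of the dual.

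For the Hermitian part, assume $q=\sqrt q{}^{\,2}$ is a square and write $q_0=\sqrt q$. The key observation is that the Frobenius power of a one-point code satisfies $C(\cX,D,mQ)^{q_0}\subseteq C(\cX,D,q_0mQ)$, exactly as the excerpt notes for the $q$-th power in Section~\ref{sec:5} (here the code is defined over $\fq=\mathbb{F}_{q_0^2}$, so Hermitian duality is taken with respect to $q_0$). Combining this with the Euclidean duality from Lemma~\ref{dz}/Proposition~\ref{df}, applied now over $\fq$ with the same canonical divisor $(2g-2)Q$, we get
\[
C(\cX,D,mQ)^{q_0}\subseteq C(\cX,D,q_0 m Q)\subseteq C(\cX,D,m^\perp Q)=C(\cX,D,mQ)^\perp
\]
as soon as $q_0 m\le m^\perp=n+2g-2-m$, that is $(q_0+1)m\le n+2g-2$. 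Since $C\subseteq C^{\perp H}$ is equivalent to $C^{q_0}\subseteq C^\perp$, this is exactly Hermitian self-orthogonality, proving the second claim.

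The only genuinely delicate point is making sure the containment of Riemann--Roch spaces is read off correctly from the duality identity: one must keep track of the fact that $C(\cX,D,mQ)^\perp=C(\cX,D,m^\perp Q)$ is a statement about codes, not about the spaces $\cL(mQ)$ themselves, so the inclusion $C(\cX,D,mQ)\subseteq C(\cX,D,m^\perp Q)$ of codes (which is what we actually need) follows from $\cL(mQ)\subseteq\cL(m^\perp Q)$, valid whenever $m\le m^\perp$, together with the evaluation map being the same on both. This is routine but worth stating. The second mild subtlety is the implicit hypothesis $n>2g$ (so that the codes are nondegenerate and the stated parameters hold); under the running assumptions of the section this is in force, and I would simply invoke it. Everything else is arithmetic of the single inequality $2m\le n+2g-2$, respectively $(q_0+1)m\le n+2g-2$.
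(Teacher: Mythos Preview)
Your proof is correct and follows exactly the approach the paper has in mind (the corollary is stated without proof there because it is immediate from the duality identity $C(\cX,D,mQ)^\perp=C(\cX,D,m^\perp Q)$ together with the inclusion $C(\cX,D,mQ)^{\sqrt q}\subseteq C(\cX,D,\sqrt q\,mQ)$). Your bookkeeping of the Euclidean and Hermitian cases, including the identification $q_0=\sqrt q$ when $q$ is a square, is precisely what is required.
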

 
\begin{corollary}\label{cor2} 
Let $(\cX,Q)$ be a pointed curve over $\fqd$ in the conditions of Lemma $\ref{dz}$ or Proposition $\ref{df}$.
Let $M$ be the dimension set of $(\cX,Q)$.   If $(q+1)m_i\leq n+2g-2$ then we have a quantum code over $\fq$ of parameters $[[n,n-2i,\ge d(C_{n-i})]]_q$, with $d(C_{n-i})\ge m-m_{n-i}+\gamma_{a+1}$, where $a=\ell((m_{n-i}-n)Q)$. 
\end{corollary}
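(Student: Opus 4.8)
The plan is to combine the Hermitian self-orthogonality criterion just established in Corollary~\ref{cor} with construction~(A) of Section~\ref{sec:3}, applied to the self-dual sequence produced by the hypotheses of Lemma~\ref{dz} or Proposition~\ref{df}. Indeed, under those hypotheses the differential form $\eta=d\phi/\phi$ has $\divv(\eta)=(n+2g-2)Q-D$, so in the notation of Section~\ref{sec:6} the sequence $C_0=(0)\subset C_1\subset\dots\subset C_n=\fqd^n$ with $C_i=C(\cX,D,m_iQ)$ satisfies $C_i^{\perp}=C(\cX,D,m_i^{\perp}Q)$ with $m_i^{\perp}=n+2g-2-m_i$; that is, the sequence is genuinely self-dual, not merely formally self-dual.

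The key computation is to verify that the index estimate $q(i)$ from construction~(A) is controlled by the arithmetic hypothesis. First I would recall that $C(\cX,D,m_iQ)^q\subseteq C(\cX,D,qm_iQ)$, so if $qm_i$ lies in the dimension set at level at most $n-i$, i.e. if $C(\cX,D,qm_iQ)\subseteq C_{n-i}=C(\cX,D,m_i^{\perp}Q)$, then $C_i^q\subseteq C_i^{\perp}$ and Theorem~\ref{qfqd} yields an $[[n,n-2i,d]]_q$ quantum code with $d\ge d(C_{n-i})$. The inclusion $C(\cX,D,qm_iQ)\subseteq C(\cX,D,m_i^{\perp}Q)$ holds as soon as $qm_i\le m_i^{\perp}=n+2g-2-m_i$, which is exactly the stated hypothesis $(q+1)m_i\le n+2g-2$. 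This is precisely the Hermitian self-orthogonality condition of Corollary~\ref{cor} read with $m=m_i$ (the curve being over $\fqd$, the ``$\sqrt{q}$'' there becomes $q$). Finally, the bound $d(C_{n-i})\ge m-m_{n-i}+\gamma_{a+1}$ with $a=\ell((m_{n-i}-n)Q)$ is just the improved Goppa bound of equation~(\ref{GoppaBound}) applied to $C_{n-i}=C(\cX,D,m_{n-i}Q)$, whose abundance is $\ell((m_{n-i}-n)Q)$ by Proposition~\ref{weakCsequence}(1).

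So the proof is essentially an assembly: invoke Corollary~\ref{cor} (Hermitian case) to get $C_i^q\subseteq C_i^{\perp}$ from $(q+1)m_i\le n+2g-2$, invoke Theorem~\ref{qfqd} (equivalently construction~(A)) to extract the quantum code, and invoke (\ref{GoppaBound}) together with Proposition~\ref{weakCsequence}(1) for the distance estimate. I expect the only genuine subtlety — and hence the ``main obstacle'' worth writing carefully — to be bookkeeping the inequalities: one must check that $m_i\le m_i^{\perp}$ is implied (so that $C_i\subseteq C_i^{\perp}$ makes sense and $2i\le n$, placing $i$ in the correct range of the sequence), and that the passage from the function-theoretic inclusion $C(\cX,D,aQ)\subseteq C(\cX,D,bQ)$ for $a\le b$ is legitimate here, which it is because $\cL(aQ)\subseteq\cL(bQ)$ and the evaluation map is the same. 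Everything else is immediate from the cited results, so the written proof can be just two or three sentences pointing to Corollary~\ref{cor}, Theorem~\ref{qfqd}, and equation~(\ref{GoppaBound}).
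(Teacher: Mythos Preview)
Your proposal is correct and follows exactly the route implicit in the paper. The paper does not spell out a proof of this corollary, but its placement immediately after Corollary~\ref{cor} and the explicit proofs of the two earlier analogous corollaries (which invoke constructions (C) and (A) respectively, plus the improved Goppa bound~(\ref{GoppaBound})) make clear that the intended argument is precisely the assembly you describe: Hermitian self-orthogonality from Corollary~\ref{cor} with $m=m_i$, then Theorem~\ref{qfqd}, then~(\ref{GoppaBound}).
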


\begin{remark}\label{remarkhyper}
Note that the conditions required by  Proposition \ref{df} are not verified by all Castle curves. Consider, for example, the hyperelliptic curve $\cX$ of equation $y^2=x^q-x+1$ over $\fq$, $q$ odd of Example \ref{hiperelipticosimpar}. Here $\rho_2=2$ which  is the pole order of $x$ at $Q$, and $\cX$ is Castle with $f=x$ in the notation of Proposition \ref{df}. As the points $P$ over the $x=\alpha$, for $\alpha$ a root of $x^q-x+1$ in $\overline{\mathbb{F}}_q$, are also ramified, we have  $\divv(dx)\neq (2g-2)Q$.  On the contrary, for the curves $y^2+y=x^{u}$,  where $q+1|u$,  over $\fqd$ of Example \ref{hiperelipticosimpar}, a simple computation shows that $\divv(dx)= (2g-2)Q$, hence they provide self-dual sequences of codes.
\end{remark}

\subsection{Curves defined by separate equations}
\label{sec:9}

Next we consider the family of curves having a plane model given by a separated variable equation $F(y)=G(x)$, where $F$ and $G$ are univariate polynomials of coprime degrees. The particular case in which one of the polynomials is linearized is interesting since it contains many of the most  relevant curves for Coding Theory purposes. For instance, several of the curves studied below were already treated in \cite{SD}.
Remember that a polynomial $F$ is called {\em linearized} (or a $q$-polynomial) if  the exponents of all monomials are powers of $q$. 

\begin{proposition}\label{sep}
Let $\cX$ be the curve defined over $\fq$ by the equation  $F(y)=G(x)$, where $F$ and $G$ are polynomials of degrees $a=\deg(F), b=\deg(G)$ with $\gcd(a,b)=1$. Let $Q$ be the common pole of $x$ and $y$ and $\rho_2$ the multiplicity of $S(Q)$. Then 
\begin{enumerate}
\item[(1)] The genus of $\cX$ is $g=(a-1)(b-1)/2$.
\item[(2)] The Weierstrass semigroup of $Q$ is $S(Q)=\langle a,b\rangle$. In particular $S(Q)$ is symmetric and $\rho_2=\min\{a,b\}$.
\item[(3)] If $F$ is a linearized separable polynomial, then $\divv(dx)=(2g-2)Q$. If $b>a$ then $\rho_2=a$ and there exists $f:\cX\to \mathbb{P}^1$ with $\divv_\infty(f)=\rho Q$ and $\divv(df)=(2g-2)Q$.
\end{enumerate}
\end{proposition}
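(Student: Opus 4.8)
The plan is to treat the three items in order, since (2) feeds (1) via the formula $g=(a-1)(b-1)/2$ for a numerical semigroup $\langle a,b\rangle$ with $\gcd(a,b)=1$, and (1)--(2) together are needed to interpret the canonical divisor in (3). For item (2), the key observation is that $x$ has a pole of order $b$ at $Q$ and $y$ a pole of order $a$ at $Q$ (this follows because the only point above $x=\infty$ is $Q$, and comparing the pole orders forced by the equation $F(y)=G(x)$: on the curve $\deg_\infty(F(y))=\deg_\infty(G(x))$, so $a\cdot v(y)=b\cdot v(x)$, and minimality together with $\gcd(a,b)=1$ gives $v(x)=b$, $v(y)=a$). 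Hence $\langle a,b\rangle\subseteq S(Q)$. Conversely, the function field $\fq(\cX)=\fq(x)[y]/(F(y)-G(x))$ has degree $a$ over $\fq(x)$, so any function with poles only at $Q$ lies in the $\fq[x]$-module generated by $1,y,\dots,y^{a-1}$; writing such a function as $\sum_{i=0}^{a-1}p_i(x)y^i$ and noting that the pole orders $i\cdot a+\deg(p_i)\cdot b$ are pairwise distinct modulo $a$, no cancellation occurs, so every pole order lies in $\langle a,b\rangle$. Thus $S(Q)=\langle a,b\rangle$; symmetry of $\langle a,b\rangle$ for coprime $a,b$ is classical (the Frobenius number is $ab-a-b$ and the semigroup is symmetric), and $\rho_2=\min\{a,b\}$ is immediate. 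Item (1) then follows from the standard genus formula for $\langle a,b\rangle$, or alternatively from Riemann--Hurwitz applied to $x:\cX\to\mathbb{P}^1$ (degree $a$) once the ramification is understood, which is exactly what item (3) will clarify.

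For item (3), assume $F$ is linearized and separable, so $F(y)=\sum_j c_j y^{q^j}$ with $F'(y)=c_0\ne 0$ a nonzero constant. Differentiating the defining relation gives $F'(y)\,dy=G'(x)\,dx$, hence $dy=(G'(x)/c_0)\,dx$, so $dx$ and $dy$ differ by a function, and it suffices to compute $\divv(dx)$. The map $x:\cX\to\mathbb{P}^1$ has degree $a$; since $F'(y)=c_0$ never vanishes, the relation $F'(y)\,dy=G'(x)\,dx$ shows $x$ is unramified at every affine point (the different exponent is controlled by $v_P(dx)$, and $dx=(c_0/G'(x))\,dy$ with $y$ a local parameter wherever $x-\alpha$ is not, i.e. wherever the fiber of $x$ is non-reduced—but separability of $F$ forces each affine fiber of $x$ to consist of $a$ distinct points). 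Therefore all ramification of $x$ is concentrated at $Q$, the unique point over $x=\infty$, where $e=a$; Riemann--Hurwitz (or a direct local computation using $v_Q(dx)=-b-1$ and the fact that $\divv(dx)$ has degree $2g-2$) pins down $\divv(dx)=(2g-2)Q$. Finally, when $b>a$ we have $\rho_2=a$ and the role of the Castle function is played by $f:=y$ (or any function of pole order $a$ built from $F$): it satisfies $\divv_\infty(f)=aQ=\rho_2 Q$, and $\divv(df)=\divv(dy)=\divv(dx)+\divv(G'(x)/c_0)$; since $G'(x)$ is a polynomial in $x$ its divisor is supported away from... one must check $G'(x)$ has no affine zeros forcing extra terms—here we use that $dy$ and $dx$ differ by a \emph{unit} at every affine point because $y$ is itself a local coordinate on each sheet, so in fact $\divv(dy)=(2g-2)Q$ as well. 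We then set $\rho=a$ and $f=y$, giving $\divv_\infty(f)=\rho Q$ and $\divv(df)=(2g-2)Q$ as required.

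\textbf{Main obstacle.} The delicate point is the claim $\divv(dx)=(2g-2)Q$, i.e. that $x$ (and hence $y$) is unramified at \emph{every} affine point. This is where the hypothesis that $F$ is \emph{separable} and \emph{linearized} is essential: for a general separated-variable curve the fibers of $x$ can be non-reduced (as Remark \ref{remarkhyper} illustrates with $y^2=x^q-x+1$, where $F(y)=y^2$ is separable but \emph{not} linearized and the affine fibers over roots of $G(x)$ are ramified). The linearized hypothesis makes $F'$ a nonzero constant, which simultaneously kills the wild-ramification subtlety in characteristic $p$ and guarantees $dy=(\text{const}\cdot G'(x))\,dx$ with no extra zeros or poles at affine points; one must state carefully that $F$ separable linearized $\Rightarrow F'\in\fq^\times$, then feed this into the local analysis at each affine place to conclude the different of $x:\cX\to\mathbb{P}^1$ is supported only at $Q$. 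Once that is in hand, the rest is the standard genus/canonical-divisor bookkeeping.
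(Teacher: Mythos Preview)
Your overall strategy is sound and close to the paper's, but a slip in the pole orders propagates into a genuine error in item~(3). From $a\,v(y)=b\,v(x)$ with $\gcd(a,b)=1$ one obtains $v(x)=a$ and $v(y)=b$, not the reverse: the extension $\fq(\cX)/\fq(x)$ has degree $a$ (the defining equation has degree $a$ in $y$), so the unique place $Q$ above $x=\infty$ has ramification index $a$, whence $-v_Q(x)=a$. With the pole orders swapped, your ``pairwise distinct modulo $a$'' claim for the terms $p_i(x)y^i$ fails as written (modulo $a$ you would see $\deg(p_i)\cdot b$, which does not depend on $i$); with the correct orders the residues are $ib$, and the argument goes through. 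More seriously, when $b>a$ the function of pole order $\rho_2=a$ is $x$, not $y$. The paper simply takes $f=x$, so that $\divv(df)=\divv(dx)=(2g-2)Q$ is exactly the identity just established. Your choice $f=y$ has $\divv_\infty(y)=bQ\neq\rho_2 Q$, and the attempted rescue via $\divv(dy)=\divv(dx)+\divv(G'(x)/c_0)$ cannot work in general: $G'(x)$ typically has affine zeros (e.g.\ on the Hermitian curve $y^q+y=x^{q+1}$ one has $G'(x)=(q+1)x^q$, vanishing above $x=0$), so $\divv(dy)\neq(2g-2)Q$.

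A minor structural difference worth noting: the paper proves (1) first via Riemann--Hurwitz and then obtains the equality $S(Q)=\langle a,b\rangle$ in (2) by comparing the number of gaps of $S(Q)$ (which equals $g$) with the genus of the numerical semigroup $\langle a,b\rangle$. This sidesteps your direct reverse-inclusion argument, which in addition tacitly assumes that the affine coordinate ring $\fq[x,y]$ is integrally closed. Either route is fine once the pole orders are corrected; the paper's is shorter.
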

\begin{proof} 
(1) It follows  from the Riemann-Hurwitz-Zeuthen  genus formula. (2) From the equation of $\cX$ we have $av(y)=bv(x)$ and since $(a,b)=1$ we conclude that $v(y)=b, v(x)=a$, hence $S(Q)\supseteq\langle a,b\rangle$. Since the genus of $\langle a,b\rangle$ is $g$ we get equality. As all semigroups generated by two elements are symmetric, the  statement follows. (3)  Since $F$ is linearized, its derivative is a constant what implies that $Q$ is the only point of $\cX$ which ramifies over the morphism $x$. Using $\divv(dx)=-2\divv_{\infty}(x)+R_x$, where $R_x$ is the ramification divisor of the morphism $x$, we obtain the equality between the divisors. To see the last statement it is enough to take $f=x$.
\end{proof}

Proposition \ref{sep} gives a large family of curves verifying the conditions of Proposition \ref{df}, and hence providing self-dual sequences of Castle and weak Castle codes. Concrete examples of relevant Castle curves used in Coding Theory  belonging to this family include  those already mentioned above: Hermitian curve, Norm-Trace curve, etc.  Recall, however, that not all Castle curves satisfying the conditions of Proposition \ref{df} belong to this family. For instance, this happens to the Ree curve  \cite{Castle2}.

Next we show some examples of curves in the conditions of Lemma \ref{dz}.

\begin{example} \label{normtrace}
Let $\cX$ be the curve given by 
\begin{equation}
y^{q^{r-1}}+y^{q^{r-2}}+\cdots+y^q+y=x^u
\end{equation} 
defined over $\mathbb{F}_{q^r}$ with $u\mid (q^r-1)/(q-1)$. We shall see that it is a weak Castle curve with $q^{r-1}(u(q-1)+1)+1$ rational points and genus $g=(u-1)(q^{r-1}-1)/2$. Note that this curve is covered by the Norm-Trace curve over $\mathbb{F}_{q^r}$; indeed, it is a quotient curve of the Norm-Trace curve by the cyclic group of automorphisms generated by $(x,y)\mapsto (\zeta x,y)$, where $\zeta$ is a primitive $u$-th root of unity. Also this equation includes many other interesting curves as the Hermitian curve and its quotients in case of $r=2$. Now, consider the multiplicative subgroup of $\mathbb{F}_{q^r}^\times$
\begin{equation}
U^{\times}=\{\beta\in\mathbb{F}_{q^r}^\times \ ; \ \beta^u\in \fq\}
\end{equation}
of cardinality $u(q-1)$ and set $U=U^{\times}\cup\{0\}$. Observe that the elements of $U$ are the elements of $\mathbb{F}_{q^r}$ which split totally over the morphism $x:\cX\to \mathbb{P}^1$. Now take 
\begin{equation}
\phi=\prod_{\beta\in U} (x-\beta)=x\prod_{\beta\in U^{\times}}(x-\beta)=x(x^{u(q-1)}-1)=x^{u(q-1)+1}-x.
\end{equation} 
Since from Proposition \ref{sep} we have $\divv(dx)=(2g-2)Q$, where $Q$ is the unique pole of $x$, a sufficient condition to have $\divv(d\phi)=(2g-2)Q$ is that $u\equiv 1 \ (\mbox{mod} \ p)$, where $p$ is the characteristic of $\mathbb{F}_{q}$. Notice that this always happens in characteristic $2$.
\end{example}

\subsection{Maximal Curves}
\label{sec:10}

Recall that a curve $\cX$ of genus $g$ over a field $\fqdr$  is maximal if it attains equality in the Hasse-Weil bound $\# \cX(\fqdr)=q^{2r}+1+2gq^r$. Maximal curves often provide AG codes with good parameters.  In this subsection we shall give examples of maximal Castle curves verifying the conditions of Proposition \ref{sep}. In particular, the examples we present are  Artin-Schereier curves defined over $\fqdr$ by plane models of type $F(y)=G(x)$, where $F(y)$ is linearized and separable  of degree $\rho$ a power of $p=\mbox{char}(\fq)$, and $G(x)$ of degree $q^{r}+1$, bigger than $\rho$. Thus these curves have genus $2g=(\rho-1)q^r$ and  the Weierstrass semigroup $S(Q)$ of $Q$ is generated by $\rho$ and $q^{r}+1$. In this situation, maximality implies the Castle condition $\# \cX(\fqdr)=q^{2r}\rho+1$. We just show a couple of examples.

\begin{example}\label{ex3} 
Let $a\in\fqd$ such that $a^q+a=0$ and consider the curve $\cX$ defined over $\fqd$ by
\begin{equation}
y^{q/p}+y^{q/p^2}+\ldots+y=ax^{q+1}
\end{equation}
where $p=\car(\fq)$. To check the maximality of $\cX$ we note that it is covered by the Hermitian curve
 $\cH: w^q+w=z^{q+1}$ via the morphism  $x=z$, $y=(-aw)^p+aw$. The cardinality of $\cX(\fqd)$ can be computed from this fact by taking into account that its genus  $g$ satisfies $2g=(q/p-1)q$.
\end{example}
  
\begin{example}\label{ex2} 
Let $\cX$ be the curve over $\fqdr$, $r$ odd, defined by
\begin{equation}
y^q+y=x^{q^r+1} .
\end{equation}
As in the previous example, $\cX$ is maximal as it is covered by the Hermitian curve $\cH: w^{q^r}+w=z^{q^r+1}$ over $\fqdr$. A covering is given by $x=z$, $y=w^{q^{r-1}}-w^{q^{r-2}}+\dots+w$. 
\end{example}

\section{Some Examples of quantum codes}
\label{sec:11}

In this Section we shall show a few examples of quantum codes  obtained from the curves treated in the above Section.  Computations have been done by using the computer system  Magma \cite{magma}. In order to compare the quality of the obtained parameters we shall use the tables \cite{tablas,tablas2} and the quantum
Gilbert-Varshamov bound \cite{GVB}:

\begin{theorem}
Suppose that $n>k\ge 2$, $d\ge 2$ and $n\equiv k \mbox{\rm (mod $2$)}$. Then there exists a stabilizer quantum code $[[n,k,d]]_q$ provided that
\begin{equation}
\frac{q^{n-k+2}-1}{q^2-1}>\sum_{i=1}^{d-1}(q^2-1)^{i-1} {{n}\choose{i}}.
\end{equation}
\end{theorem}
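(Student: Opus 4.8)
The plan is to obtain the code through the Hermitian construction of Theorem~\ref{qfqd}. Because $n\equiv k\pmod 2$, set $m=(n-k)/2$; since $n>k\ge 2$ one has $1\le m\le\lfloor n/2\rfloor$. By Theorem~\ref{qfqd} it suffices to exhibit an $\fqd$-linear code $C\subseteq\fqd^n$ with $\dim_{\fqd}C=m$ which is Hermitian self-orthogonal and whose Hermitian dual satisfies $d(C^{\perp H})\ge d$; such a $C$ yields a (pure) quantum code $[[n,n-2m,\ge d]]_q=[[n,k,\ge d]]_q$, which is what the statement requires.

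The next step is to recast the requirement on $C$ geometrically. View $\fqd^n$ as a non-degenerate Hermitian space over $\fqd$ with fixed field $\fq$; then $C^{\perp H}$ is $\fqd$-linear, and since the Hamming weight is constant along the nonzero vectors of any $\fqd$-line, the inequality $d(C^{\perp H})\ge d$ is equivalent to $C^{\perp H}$ containing no \emph{bad line}, i.e. no one-dimensional $\fqd$-subspace all of whose nonzero vectors have weight at most $d-1$. Counting supports and nonzero fillings up to scalars, the number of bad lines equals
\begin{equation*}
\frac{1}{q^2-1}\sum_{i=1}^{d-1}\binom{n}{i}(q^2-1)^{i}=\sum_{i=1}^{d-1}(q^2-1)^{i-1}\binom{n}{i},
\end{equation*}
which is precisely the right-hand side of the stated inequality. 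Since $\ell\subseteq C^{\perp H}$ exactly when $C\subseteq\ell^{\perp H}$, the problem reduces to finding a totally isotropic $m$-dimensional subspace $C$ of $\fqd^n$ that is contained in $\ell^{\perp H}$ for no bad line $\ell$.

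I would finish by a counting (union-bound) argument over the set $\mathcal{S}$ of all totally isotropic $m$-dimensional $\fqd$-subspaces of $\fqd^n$, which is non-empty because $m\le\lfloor n/2\rfloor$. The key point is the bound
\begin{equation*}
\#\{\,C\in\mathcal{S}\ :\ C\subseteq\ell^{\perp H}\,\}\ \le\ \frac{q^2-1}{q^{n-k+2}-1}\,|\mathcal{S}|
\end{equation*}
for every line $\ell$. Granting it, the fraction of $C\in\mathcal{S}$ lying in $\ell^{\perp H}$ for at least one bad line is at most the number of bad lines times $(q^2-1)/(q^{n-k+2}-1)$, which is strictly less than $1$ by hypothesis; hence some $C\in\mathcal{S}$ escapes every bad hyperplane, and applying Theorem~\ref{qfqd} to this $C$ completes the proof.

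The part I expect to be the real work is the displayed bound, which rests on the explicit enumeration of totally isotropic subspaces of Hermitian spaces over $\fqd$. One needs $|\mathcal{S}|$ and, for a fixed line $\ell$, the number of members of $\mathcal{S}$ inside $\ell^{\perp H}$, treating two cases: if $\ell$ is anisotropic then $\ell^{\perp H}$ is a non-degenerate Hermitian space of dimension $n-1$ and one counts its totally isotropic $m$-subspaces directly; if $\ell$ is isotropic then $\ell^{\perp H}$ is degenerate with one-dimensional radical $\ell$, and one stratifies its totally isotropic $m$-subspaces according to whether they contain $\ell$, reducing to the non-degenerate quotient $\ell^{\perp H}/\ell$ of dimension $n-2$. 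In each case the ratio to $|\mathcal{S}|$ simplifies, and one checks that it is bounded above by $(q^2-1)/(q^{n-k+2}-1)=1/(1+q^2+\cdots+q^{n-k})$ — exactly the factor forced by the Gilbert--Varshamov inequality — uniformly in $\ell$; identifying the extremal line and confirming the constant is the delicate step, the rest being routine algebra. The same enumeration underlies the classical Gilbert--Varshamov-type greedy constructions of a self-orthogonal basis $v_1,\dots,v_m$, but obtaining the sharp bound above requires these exact subspace counts rather than crude fraction estimates.
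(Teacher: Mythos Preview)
The paper does not supply a proof of this theorem; it is quoted, with attribution to \cite{GVB}, purely as a benchmark against which the parameters of the examples in Section~4 are measured. There is thus no ``paper's proof'' to set beside your attempt, and what follows is an assessment of your sketch on its own terms.

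Your reduction is sound: a Hermitian self-orthogonal $C\subseteq\fqd^n$ of $\fqd$-dimension $m=(n-k)/2$ with $d(C^{\perp H})\ge d$ does yield, via Theorem~\ref{qfqd}, a pure $[[n,k,\ge d]]_q$ stabilizer code, and $d(C^{\perp H})\ge d$ is indeed equivalent to $C\not\subseteq\ell^{\perp H}$ for every low-weight line $\ell$, of which there are exactly $\sum_{i=1}^{d-1}(q^2-1)^{i-1}\binom{n}{i}$.

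The gap is the displayed inequality
\[
\#\{C\in\mathcal{S}:C\subseteq\ell^{\perp H}\}\ \le\ \frac{q^2-1}{q^{n-k+2}-1}\,|\mathcal{S}|,
\]
which you flag as ``the delicate step'' but which is in fact false. Already for $n=4$, $m=1$ (so $k=2$) and $\ell$ isotropic one has $|\mathcal{S}|=(q^2+1)(q^3+1)$, while the isotropic lines inside the degenerate $3$-space $\ell^{\perp H}$ number $1+q^2(q+1)=q^3+q^2+1$ (the radical line together with $q^2$ lifts of each of the $q+1$ isotropic points of $\ell^{\perp H}/\ell$); the resulting ratio strictly exceeds your claimed bound $1/(q^2+1)$. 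For $n=5$, $m=1$ the bound fails even for anisotropic $\ell$, where the ratio is $(q^3+1)/(q^5+1)>1/(q^2+1)$. More decisively, pass to the equivalent symplectic picture over $\fq^{2n}$, where every nonzero vector is isotropic and the symplectic group acts transitively on them: the analogous ratio is then the \emph{same} for every $\ell$, and a direct computation (e.g.\ $2n=8$, $m=n-k=2$) shows that the union bound it produces is strictly weaker than the Gilbert--Varshamov inequality. So the approach cannot be rescued by sharpening the constant or by splitting into isotropic and anisotropic cases; the Feng--Ma argument in \cite{GVB} is not a union bound of this shape.
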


We shall use the following notation: given a code $[[n,k,d]]_q$, we write $[[n,k,d]]_q^{\dag}$ when the parameters $n,k,d$ meet the Gilbert-Varshamov bound with equality. We  write $[[n,k,d]]_q^{\ddag}$ when the parameters $n,k,d$ strictly improve on the Gilbert-Varshamov bound.

\subsection{Codes from the Suzuki curve}
\label{sec:12}

The curve $\cX$ of equation $y^q+y=x^{q_0}(x^q+x)$ over $\fq$, where $q_0=2^s$ and $q=2q_0^2$ is called the {\em Suzuki} curve. It has genus $g=q_0(q-1)$ and $q^2+1$ rational points. Let $Q$ be the common pole of $x$ and $y$. Then the Weierstrass semigroup $S(Q)$ is generated by $\langle  q,q+1,q+2q_0, q+2q_0+1\rangle$ (see \cite{OM} and the references therein), so $\cX$ is a Castle curve. Codes arising from this curve have been extensively studied and their parameters are rather well known. In particular, the conditions stated in Proposition \ref{df} and Corollary \ref{cor} holds, so that the duals of one point Suzuki codes are again one-point Suzuki codes. We can use them to obtain quantum codes from construction (C) of Section \ref{sec:3}.

\begin{example}
For $q=8$, $C(\cX,D,mQ)$ has length $64$ and is self-orthogonal whenever $m\le 45$. Then we get quantum codes over $\mathbb{F}_8$ of parameters 
$[[64,62,2]]_8^{\dag}$, 
$[[64,54,3]]_8$, 
$[[64,52,4]]_8^{\dag}$, 
$[[64,42,5]]_8$, 
$[[64,40,6]]_8$,
$[[64,38,7]]_8$, 
$[[64,36,8]]_8$.
 These are good parameters compared with those listed in \cite{tablas}. 
\end{example}

\subsection{Codes from elliptic and hyperelliptic curves}
\label{sec:13}

\begin{example}\label{Jhyp}
Consider the family of hyperelliptic curves $y^2+y=x^{u}$,  where $q+1|u$ and $u\le q^2-1$  over $\fqd$, $q$ even, of Example \ref{hiperelipticosimpar}. In Remark \ref{remarkhyper} we noted that they are Castle and  verify $\divv(dx)= (2g-2)Q$. Since they have genus $g=\lfloor u-1\rfloor/2$,  $C(\cX,D,mQ)$  is self-orthogonal whenever $2m\le 2q^2+u-3$ and Hermitian self-orthogonal whenever $(q+1)m\le 2q^2+u-3$.   We find quantum codes with  parameters
$[[8,6,2]]_2^{\ddag}$,
$[[32,30,2]]_4^{\ddag}$,
$[[32,24,4]]_4^{\ddag}$,
$[[128,126,2]]_8^{\ddag}$,
$[[128,116,4]]_8^{\dag}$,
$[[128,112,6]]_8^{\ddag}$,
$[[128,108,8]]_8^{\ddag}$
among others.
The particular case $u=q+1$ was studied in \cite[Sec. III (A)]{Ji}.
\end{example}

\begin{example}
Here we show an example of a complete weak Castle curve which is not Castle. When $q\equiv 3$ (mod 4) then $-1$ is not a square, hence the elliptic curve $\cX:y^2=x^3+ax$ has $q+1$ rational points over $\fq$. Consequently it has $q^r+1+2\sqrt{q^r}$ rational points over $\fqr$ for $r\equiv 2$ (mod 3).  To give a concrete example take $q=3, r=2$. A simple computation shows that $\rho_2=2$ and $\cX$ is not Castle. However $\cX$ is a complete weak Castle curve with respect the morphism $f=y$.  We obtain codes with the following parameters over ${\mathbb F}_9$:
$[[15,13,2]]_9^{\dag}$, 
$[[15,7,4]]_9^{\dag}$, 
$[[15,5,5]]_9^{\dag}$, 
$[[15,3,6]]_9^{\dag}$, 
$[[15,1,7]]_9$.
Quantum codes from elliptic curves have been treated in \cite{JX}.
\end{example}

\subsection{Codes from Hermitian, Norm-Trace and related curves} 
\label{sec:14}

Quantum codes arising from Hermitian curves have been extensively treated, \cite{KM,SK}. Here we shall consider the related more general curves $\cX$ of Example \ref{normtrace} given by an equation of the form
\begin{equation}
y^{q^{r-1}}+y^{q^{r-2}}+\cdots+y^q+y=x^u
\end{equation}  
over $\mathbb{F}_{q^r}$ with $u\mid (q^r-1)/(q-1)$. We saw that when $u\equiv 1 \ (\mbox{mod} \ p)$ then $\cX$ satisfies the conditions of Lemma \ref{dz}, where $p$ is the characteristic of $\mathbb{F}_{q^r}$. Note that when $u=(q^r-1)/(q-1)$, then $\cX$ is the Norm-Trace curve, and if $r=2$ then $\cX$ is the Hermitian curve. In both cases  such conditions are also automatically satisfied.

\begin{example} 
For $q=2$, $r=4$ and $u=3$, from  Corollary \ref{cor} the codes $C(\cX, D, mQ)$ of length $32$ over $\mathbb{F}_{16}$ are Hermitian self-orthogonal whenever $m\leq 8$. The quantum codes over $\mathbb{F}_{4}$ obtained from these  have parameters $[[32,30,2]]_4^{\ddag}$  and 
$[[32,24,3]]_4^{\dag}$. 
\end{example}

\begin{example} 
For $q=2$, $r=3$ and $u=7$, we obtain the Norm-Trace curve. The codes $C(\cX, D, mQ)$ of length $32$  over the field $\mathbb{F}_8$ are self-orthogonal if $m\leq 24$. We  obtain 
$[[32,28,2]]_8^{\dag}$,
$[[32,26,3]]_8^{\ddag}$,
$[[32,18,4]]_8$
quantum codes. 
\end{example}

\begin{example} 
For $r=2$ we obtain the Hermitian curves and  quotients $y^q+y=x^u$ with $u\mid q+1$, over $\mathbb{F}_{q^2}$, which are $\mathbb{F}_{q^2}$-maximal. The case where $q$ is an odd power of 2 and $u=3$ was considered in \cite[Sec. III (B)]{Ji}. We obtain codes
$[[8,6,2]]_2^{\ddag}$,
$[[64,54,3]]_4^{\dag}$,
$[[64,52,4]]_4^{\dag}$,
$[[176,162,3]]_8$,
$[[176,156,5]]_8$,
$[[176,154,6]]_8$,
$[[176,150,8]]_8^{\dag}$,
$[[176,146,9]]_8^{\dag}$.
\end{example}

\subsection{Codes from maximal curves} 
\label{sec:15}

To end this section, let us see some quantum codes arising from maximal curves as the ones considered in Section \ref{sec:10}.

\begin{example}
Let $\cX$ be the curve over $\fqd$ of Example \ref{ex3} 
\begin{equation}
y^{q/p}+y^{q/p^2}+\ldots+y=ax^{q+1}
\end{equation}
where $a\in \fqd$ verifies $a^q+a=0$ and $p$ is the characteristic of $\fq$. For $q=9$ let us take the curve  $y^3+y=\alpha^5x^{10}$ over $\mathbb{F}_{81}$, where $\alpha$ is a primitive element of $\mathbb{F}_{81}$. It has $244$ rational points and genus $9$. The codes $C(\cX,D,mQ)$ are Hermitian self-orthogonal if $m\leq 25$, by Corollary \ref{cor}. For these values we obtain quantum codes with the following parameters: 
$[[ 243, 241, 2 ]]_9^{\ddag}$, 
$[[ 243, 233, 3 ]]_9^{\dag}$, 
$[[ 243, 219, 6 ]]_9$ and 
$[ [ 243, 213, 9 ]]_9^{\dag}$. 
For $q=8$ take $a=1$. The curve has 257 rational points and genus 12. From Corollary \ref{cor}, the codes $C(\cX,D,mQ)$ are Hermitian self-orthogonal if $m\leq 30$. With these values we can provide quantum codes with the following parameters: 
$[[ 256, 254, 2 ]]_8^{\ddag}$,
 $[[ 256, 248, 3 ]]_8^{\dag}$, 
 $[[ 256, 238, 4 ]]_8$ and 
 $[[ 256, 224, 8 ]]_8$. 
 These quantum codes have good parameters compared to similar quantum codes in \cite{tablas}.
\end{example}

\begin{example} 
This is a  particular case of the curve given in Example \ref{ex2}. Take $r=3$ in that example and
let  $\cX$ be the curve over $\mathbb{F}_{q^6}$ given by the equation
\begin{equation}
y^q+y=x^{q^3+1}.
\end{equation}
When $q=2$,  $\cX$ has genus 4 and 129 rational points. From Corollary \ref{cor}, the codes $C(\cX,D,mQ)$ are Hermitian self-orthogonal if $m\leq 14$. With these values we obtain the following quantum codes: 
$[[ 128, 126, 2 ]]_8^{\ddag}$, 
$[[ 128, 116, 4 ]]_8^{\dag}$,
$[[ 128, 112, 6 ]]_8^{\ddag}$ and 
$[[ 128, 108, 8 ]]_8^{\ddag}$.
\end{example}

\section{Traces of Castle codes}
\label{sec:16}

Some of the codes we have obtained in the previous sections are defined over large extensions of $\fq$. However, in practice we can be interested in codes defined over smaller fields. Such codes can be obtained from the formers by two ways:  given a code $C$ over $\fqr$, we can consider the subfield subcode of $C$ over $\fq$,  $C|\fq=C\cap\fq^n$, and the trace code $\tr(C)$ (see \cite{Sti}, Chapter VIII, where a detailed treatment of subfield subcodes is done). Both codes over $\fq$ are closely related by Delsarte's theorem,  $(C|\fq)^{\perp}=\tr(C^{\perp})$. However, as we shall see, at least for AG codes the way of using trace maps is simpler.
Remark however that  quantum codes from subfield subcodes of affine variety codes have been studied in \cite{GH,GHR}.

As in the previous sections, let $(\cX,Q)$ be a pointed curve, now defined over $\fqr$.
Let $\tr:\fqr\rightarrow\fq$ be  the trace map, defined as $\tr(x)=x+x^q+\cdots+x^{q^{r-1}}$. $\tr$ is a surjective $\fq$-linear map
that can be extended to $\fqr^n$ (coordinatewise) and to $\cL(\infty Q)$. Here we list some properties of these extensions. 

\begin{lemma}\label{trproperties}
The following properties hold.
\begin{enumerate}
\item[\rm (a)]  Both extensions of $\tr$ are $\fq$-linear and $\tr:\fqr^n\rightarrow\fq^n$ is surjective. 
\item[\rm (b)] If $f\in\cL(\infty Q)$ then $v(\tr(f))=q^{r-1}v(f)$. Thus, if $f$ is nonconstant then $\tr(f)\neq 0$.
\item[\rm (c)] For all $f\in\cL(\infty Q)$ we have $\tr(\ev(f))=\ev(\tr(f))=\ev(\tr(f^q))$. 
\end{enumerate}
\end{lemma}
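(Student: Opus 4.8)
\textbf{Proof plan for Lemma \ref{trproperties}.}

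The plan is to verify the three items separately, using only the definition of $\tr$ and elementary properties of the valuation $v=-v_Q$. Item (a) is essentially formal: $\tr(x)=x+x^q+\cdots+x^{q^{r-1}}$ is a sum of compositions of the Frobenius automorphism (which is additive and fixes $\fq$) with the identity, so both the coordinatewise extension to $\fqr^n$ and the extension to $\cL(\infty Q)$ are $\fq$-linear; I would note in passing that $\tr$ preserves $\cL(\infty Q)$ because Frobenius sends a function with poles only at $Q$ to another such function (as $Q$ is $\fqr$-rational). Surjectivity of $\tr\colon\fqr^n\to\fq^n$ reduces coordinatewise to the classical surjectivity of the field trace $\fqr\to\fq$.

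For item (b), the key point is the behaviour of $v$ under Frobenius: for $f\in\cL(\infty Q)$ and $1\le j\le r-1$ one has $v(f^{q^j})=q^j v(f)$. If $f$ is nonconstant, then $v(f)\ge 1$, so the $r$ summands $f,f^q,\dots,f^{q^{r-1}}$ have pairwise distinct pole orders at $Q$, the largest being $q^{r-1}v(f)$; by the ultrametric (strict triangle) inequality for $v_Q$, the pole order of the sum equals the maximum, giving $v(\tr(f))=q^{r-1}v(f)$. In particular this is positive, so $\tr(f)\neq 0$; and if $f$ is constant the identity $v(\tr(f))=0=q^{r-1}v(f)$ holds trivially (with the usual convention that the constant $0$ aside, constants have $v=0$).

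For item (c), the first equality $\tr(\ev(f))=\ev(\tr(f))$ is just the statement that evaluation at the rational points $P_1,\dots,P_n$ commutes with the sum $f+f^q+\cdots+f^{q^{r-1}}$, together with the fact that for a function $h$ regular at a rational point $P$ one has $h^q(P)=h(P)^q$; summing over the conjugates and evaluating in either order yields the same vector. The second equality $\ev(\tr(f))=\ev(\tr(f^q))$ follows because $\tr(f)$ and $\tr(f^q)=f^q+f^{q^2}+\cdots+f^{q^r}$ differ only by $f^{q^r}-f$, and $f^{q^r}-f$ vanishes at every $\fqr$-rational point $P_i$, hence evaluates to the zero vector; alternatively $\tr(f^q)=\tr(f)^q$ and $\tr(f)$ takes values in $\fq$ at the $P_i$, so raising to the $q$-th power fixes the evaluation vector.

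I do not expect any genuine obstacle here; the only place requiring a little care is item (b), where one must make sure the Frobenius conjugates of a nonconstant $f$ really have strictly increasing (hence distinct) pole orders so that no cancellation of the leading term can occur, and separately dispose of the constant case. Everything else is bookkeeping with $\fq$-linearity and with the identity $h(P)^q=h^q(P)$ at rational points.
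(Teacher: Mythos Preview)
The paper states this lemma without proof, so there is no argument to compare against; your plan is correct and would serve as a proof. The only small point worth flagging is in item~(b): your treatment of the constant case glosses over the possibility that a nonzero constant $c\in\fqr$ has $\tr(c)=0$, in which case the literal identity $v(\tr(f))=q^{r-1}v(f)$ fails (the left side is undefined, or $-\infty$, while the right side is $0$). This is really an imprecision in the lemma's own statement rather than a defect in your reasoning, and the consequence that is actually used later in the paper---that $\tr(f)\neq 0$ whenever $f$ is nonconstant---is exactly what your argument establishes via the strict inequality of pole orders.
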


Let $\cV$ be the set of functions in $\cL(\infty Q)$ evaluating to $\fq^n$, $\cV=\{ f\in\cL(\infty Q) : \ev(f)\in\fq^n\}$. $\cV$ is a $\fq$-linear subspace of $\cL(\infty Q)$ containing $\ker(\ev)$. As a consequence of Lemma \ref{trproperties}(c), $\cV$ also contains $\tr(\cL(\infty Q))$. The next proposition explains why trace codes are simpler to handle than subfield subcodes.

\begin{proposition}
$\cV=\tr(\cL(\infty Q))+\ker(\ev)$.
\end{proposition}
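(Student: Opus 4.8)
The inclusion $\tr(\cL(\infty Q))+\ker(\ev)\subseteq\cV$ is already noted in the text: $\ker(\ev)\subseteq\cV$ trivially, and by Lemma~\ref{trproperties}(c) every $\tr(f)$ evaluates into $\fq^n$, so the sum lies in $\cV$. The work is the reverse inclusion $\cV\subseteq\tr(\cL(\infty Q))+\ker(\ev)$. The plan is to take $f\in\cV$, so $\ev(f)\in\fq^n$, and produce $g\in\cL(\infty Q)$ with $\tr(g)=f$ modulo $\ker(\ev)$; equivalently, to find $g$ with $\ev(\tr(g))=\ev(f)$. A natural candidate is $g=f/r$ when $\car(\fq)\nmid r$, since then $\tr(g)=g+g^q+\cdots+g^{q^{r-1}}$ and, using that $\ev(f)$ is fixed coordinatewise by raising to the $q$-th power (because its entries lie in $\fq$), one gets $\ev(\tr(f/r))=\tfrac{1}{r}\sum_{i=0}^{r-1}\ev(f^{q^i})=\tfrac{1}{r}\sum_{i=0}^{r-1}\ev(f)^{q^i}=\ev(f)$. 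Here I am using Lemma~\ref{trproperties}(c) in the form $\ev(\tr(h))=\ev(h)+\ev(h)^q+\cdots$ componentwise, together with $h^{q}\in\cL(\infty Q)$ whenever $h\in\cL(\infty Q)$.

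The only obstacle is the case $\car(\fq)\mid r$, where $1/r$ is not available. The fix is a counting/dimension argument: restrict $\tr$ to the finite-dimensional space $\cL(mQ)$ for $m$ large enough that $\cL(mQ)$ already surjects onto $\cV\cap\cL(mQ)$ under $\ev$ (such $m$ exists since $\ev(\cL(mQ))$ stabilizes at the full code containing $\cV$-images for $m\ge n+2g-2$, say), and compute $\dim_{\fq}\tr(\cL(mQ))$. By Lemma~\ref{trproperties}(b), $\tr$ is injective on the span of a set of functions with pairwise distinct pole orders at $Q$ (distinct nonzero pole orders map to distinct pole orders $q^{r-1}v(f)$, and the constants map to constants), so $\tr$ has kernel only the subspace on which $v$-values collide — in fact one shows $\ker(\tr|_{\cL(\infty Q)})\subseteq\ker(\ev)$ by noting a nonzero element of the kernel would have to be a nonzero constant times something, contradicting (b) unless it is itself killed by $\ev$. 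Then a dimension count over $\fq$ gives $\dim\bigl(\tr(\cL(mQ))+\ker(\ev)\bigr)=\dim\cL(mQ)$ as $\fq$-spaces (after quotienting by $\ker(\ev)$), which equals $\dim_{\fq}\ev(\cL(mQ))\otimes_{\fq}\fqr$-type bookkeeping; matching this against $\dim_{\fq}\cV$ forces equality.

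I expect the hardest point to be pinning down $\ker(\tr)$ relative to $\ker(\ev)$ cleanly in characteristic dividing $r$, and making the dimension bookkeeping between $\fq$- and $\fqr$-dimensions airtight. A cleaner route that sidesteps the bad characteristic entirely: argue directly on evaluation vectors. The space $\ev(\cV)=\ev(\cL(\infty Q))\cap\fq^n$, and since $\ev(\cL(\infty Q))=C(\cX,D,mQ)$ for $m\gg 0$ is a code $C$ over $\fqr$ with $\ev(\cL(\infty Q))\cap\fq^n$ its subfield subcode, Delsarte's theorem (quoted in the excerpt) gives $\ev(\cV)=(\tr(C^{\perp}))^{\perp}\cap\ldots$; but more simply, one knows $\tr(C)\supseteq C\cap\fq^n$ always and here $\ev(\cL(\infty Q))$ is Galois-stable (it is $\ev$ of a space defined over $\fqr$ but the code equals its own $q$-power image once $m$ is large, by weak Castle/Castle structure), and for a Galois-stable code $C$ over $\fqr$ one has $\tr(C)=C\cap\fq^n$. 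Pulling this back through $\ev$, and using that $\ev$ maps $\cL(mQ)$ onto $C$ with kernel inside $\ker(\ev)\subseteq\cV$, yields $\cV=\ev^{-1}(C\cap\fq^n)=\ev^{-1}(\tr(C))=\tr(\cL(mQ))+\ker(\ev)\subseteq\tr(\cL(\infty Q))+\ker(\ev)$, completing the proof. The main obstacle in this version is justifying that the relevant code is Galois-stable and that $\tr$ commutes appropriately with $\ev$ on all of $\cL(mQ)$, both of which follow from Lemma~\ref{trproperties}(c).
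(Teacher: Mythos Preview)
Your third route is correct and, once you strip away the decoration, is exactly the paper's argument. The point is that for $m$ large, $C(\cX,D,mQ)=\fqr^n$, so the ``Galois-stable code'' under discussion is simply all of $\fqr^n$; the identity $\tr(C)=C\cap\fq^n$ then says nothing more than that $\tr:\fqr^n\to\fq^n$ is surjective, which is Lemma~\ref{trproperties}(a). The paper phrases it precisely this way: since $\ev$ and $\tr$ are each surjective, so is $\tr\circ\ev:\cL(\infty Q)\to\fq^n$; given $f\in\cV$ (so $\ev(f)\in\fq^n$), choose $g$ with $\tr(\ev(g))=\ev(f)$, rewrite via Lemma~\ref{trproperties}(c) as $\ev(\tr(g))=\ev(f)$, and conclude $f-\tr(g)\in\ker(\ev)$. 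That is the whole proof. You arrived there, but through an unnecessary detour invoking subfield-subcode machinery and ``weak Castle/Castle structure'' that plays no role.

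Your first two approaches should be dropped. The $g=f/r$ trick genuinely fails when $\car(\fq)\mid r$, as you note. The dimension-counting sketch has an actual error: the claim $\ker\bigl(\tr|_{\cL(\infty Q)}\bigr)\subseteq\ker(\ev)$ is false, since any nonzero constant $c\in\fqr$ with field trace $\tr(c)=0$ lies in the left side but evaluates to $(c,\dots,c)\neq\bbcero$. Lemma~\ref{trproperties}(b) guarantees $\tr(f)\neq 0$ only for \emph{nonconstant} $f$, so the injectivity you want on $\cL(\infty Q)$ does not hold, and the subsequent bookkeeping does not go through as written.
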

\begin{proof}
Since $\tr\circ\ev$ is surjective (as both maps are so), if $f\in\cV$ there exists $g\in\cL(\infty Q)$ such that $\ev(f)=\tr(\ev(g))=\ev(\tr(g))$, hence $f-\tr(g)\in\ker (\ev)$. The converse is clear.
\end{proof}

Thus $\ev(\cV)=\ev(\tr(\cL(\infty Q))$ and henceforth we will only consider trace codes. However note that for Castle codes we can also give a basis of $\ker(\ev)$ over $\fqr$. For all $m>0$ we have $\ker(\ev)\cap\cL(mQ)=\cL(mQ-D)$. Let $\varphi=f_2^{q^r}-f_2$. Since $\cX$ is Castle, we have $\divv(\varphi)=D-nQ$. Consequently there is an isomorphism $\cL((n-m)Q)\rightarrow\cL(mQ-D)$ given by $f\mapsto\varphi f$ and the set $\{ \varphi f_1,\varphi f_2,\dots\}$ is a basis of $\ker(\ev)$.

The following results allows us to obtain self-orthogonal trace codes over $\fq$ when the pointed  curve $(\cX,Q)$ of genus $g$ over $\fqr$ has the self-dual property.  Let $f_1=1$ and for $i=2,3,\dots$ let $f_i\in\cL(\infty Q)$ such that $v(f_i)=\rho_i$. In view of Lemma \ref{trproperties}(c),  if $\rho_i=q\rho_t$ then we take $f_i=f_t^q$. Then  $L=\{f_1,f_2,\dots\}$ is a basis of  $\cL(\infty Q)$ over $\fqr$ verifying $L^q\subset L$, and the sets $L_m=\{f_i\in L : \rho_i\le m\}$  are bases of $\cL(mQ)$ for all $m\ge 0$.
Let $\alpha$ be a primitive element of $\fqr$ over $\fq$. Then $\{1,\alpha,\dots,\alpha^{r-1}\}$ is a basis of $\fqr$ over $\fq$ and every element $a\in\fqr$ can be written as $a=\lambda_1+\lambda_2\alpha+\cdots+\lambda_r\alpha^{r-1}$ with $\lambda_1,\dots,\lambda_r\in\fq$.  

\begin{proposition}\label{basistr}
The set \ $\cB=\{1\}\cup\{\tr(\alpha^jf_i) : j=0,\dots,r-1, f_i\in L \mbox{ for } i>1\}$ is a basis of $\tr(\cL(\infty Q))$ over $\fq$.
The set \ $\cB_m=\{1\}\cup\{\tr(\alpha^jf_i) : j=0,\dots,r-1, f_i\in L_m \mbox{ for } i>1\}$ is a basis of $\tr(\cL(m Q))$ over $\fq$.  
\end{proposition}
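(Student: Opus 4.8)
The plan is to show that $\cB$ spans $\tr(\cL(\infty Q))$ over $\fq$ and that its elements are linearly independent; the argument for $\cB_m$ is identical once one observes that $\tr$ respects the filtration by pole order. For the spanning claim, recall that $L=\{f_1,f_2,\dots\}$ is an $\fqr$-basis of $\cL(\infty Q)$, so an arbitrary element of $\tr(\cL(\infty Q))$ has the form $\tr\bigl(\sum_i a_i f_i\bigr)$ with $a_i\in\fqr$. By $\fq$-linearity of $\tr$ this equals $\sum_i \tr(a_i f_i)$. Writing each $a_i=\sum_{j=0}^{r-1}\lambda_{ij}\alpha^j$ with $\lambda_{ij}\in\fq$ and using $\fq$-linearity again gives $\sum_i\sum_j \lambda_{ij}\tr(\alpha^j f_i)$. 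Since $f_1=1$ and $\tr(\alpha^j)$ is a scalar in $\fq$, the $i=1$ contribution lies in $\fq\cdot 1$; all remaining terms are the generators $\tr(\alpha^j f_i)$, $i>1$, listed in $\cB$. Hence $\cB$ spans.

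For linear independence, suppose $\mu_0\cdot 1+\sum_{i>1}\sum_{j=0}^{r-1}\mu_{ij}\tr(\alpha^j f_i)=0$ with $\mu_0,\mu_{ij}\in\fq$. Collecting the inner sum, put $a_i=\sum_j\mu_{ij}\alpha^j\in\fqr$ for $i>1$; then $\fq$-linearity of $\tr$ rewrites the relation as $\mu_0 + \tr\bigl(\sum_{i>1} a_i f_i\bigr)=0$, i.e. $\tr(g)=-\mu_0$ where $g=\sum_{i>1}a_i f_i\in\cL(\infty Q)$ has no constant term. The key point is Lemma \ref{trproperties}(b): if $g$ is nonconstant then $v(\tr(g))=q^{r-1}v(g)>0$, so $\tr(g)$ is nonconstant, contradicting $\tr(g)=-\mu_0\in\fq$. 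Therefore $g$ must be constant; since $g$ has no constant term, $g=0$, and because $L$ is an $\fqr$-basis this forces every $a_i=0$. As $\{1,\alpha,\dots,\alpha^{r-1}\}$ is an $\fq$-basis of $\fqr$, each $a_i=\sum_j\mu_{ij}\alpha^j=0$ gives $\mu_{ij}=0$ for all $i,j$, and then $\mu_0=0$ as well.

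For the statement about $\cB_m$ one runs the same two arguments inside $\cL(mQ)$, using that $L_m$ is an $\fqr$-basis of $\cL(mQ)$ and, for the spanning direction, that $\tr(\cL(mQ))\subseteq\cL(mQ)$ by Lemma \ref{trproperties}(b) (each $\tr(\alpha^j f_i)$ has pole order $q^{r-1}\rho_i\le q^{r-1}m$, but more to the point lies in $\cL(mQ)$ because $\tr$ acts coordinatewise on the coefficients and does not increase the set of poles). The independence step is unchanged: a vanishing $\fq$-combination again yields a nonconstant $g\in\cL(mQ)$ with $\tr(g)\in\fq$, contradicting Lemma \ref{trproperties}(b).

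I expect the only real subtlety to be the independence argument, specifically isolating why $\tr$ cannot collapse a nontrivial combination: this is exactly where Lemma \ref{trproperties}(b) does the work, and it is worth stating explicitly that $v(\tr(g))=q^{r-1}v(g)$ prevents $\tr(g)$ from landing in the ground field unless $g$ is already constant. Everything else is bookkeeping with the two nested bases ($L$ over $\fqr$ inside $\cL(\infty Q)$, and $\{1,\alpha,\dots,\alpha^{r-1}\}$ over $\fq$ inside $\fqr$) and the $\fq$-linearity of $\tr$.
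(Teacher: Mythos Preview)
Your argument is correct and essentially matches the paper's: spanning via $\fq$-linearity of $\tr$ together with the $\fqr$-basis $L$, and independence via Lemma \ref{trproperties}(b). The paper organizes the independence step slightly differently---first noting that the pole orders $v(\tr(f_i))=q^{r-1}\rho_i$ are pairwise distinct, which reduces to checking independence within each block $\{\tr(\alpha^j f_i):j=0,\dots,r-1\}$---but your all-at-once version with $g=\sum_{i>1}a_i f_i$ is equivalent and arguably cleaner.

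One slip to fix in your discussion of $\cB_m$: the claim $\tr(\cL(mQ))\subseteq\cL(mQ)$ is false in general, since Lemma \ref{trproperties}(b) says the pole order is \emph{multiplied} by $q^{r-1}$, not preserved (and ``$\tr$ acts coordinatewise on the coefficients'' does not apply here---there are no coefficients, just functions). Fortunately you do not need this claim: the spanning of $\cB_m$ follows directly from $L_m$ being an $\fqr$-basis of $\cL(mQ)$, exactly as in your argument for $\cB$, and the independence argument is unchanged.
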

\begin{proof}
It suffices to show the first statement. That $\cB$ is a generator set follows from the $\fq$-linearity of $\tr$ and the fact that $L$ is a basis of $\cL(\infty Q)$ over $\fqr$. To see that $\cB$ is an independent set, and since $v(\tr(f_i))\neq v(\tr(f_t))$ if $i\neq t$, it suffices to show that $\tr(f_i),\tr(\alpha f_i),\dots,\tr(\alpha^{r-1}f_i)$ are independent for all $i>1$. If $0=\sum_j\lambda_j\tr(\alpha^{j-1}f_i) = \tr(\sum_j\lambda_j \alpha^{j-1}f_i)$ then $\tr(af_i)=0$, where $a=\sum_j \lambda_j \alpha^{j-1}$. By Lemma \ref{trproperties}(b), we have $af_i=0$ hence $a=0$ and $\lambda_j=0$ for all $j$.
\end{proof}

For $j=0,\dots,r-1$, let   $\beta_j\in\fqr$ such that $\beta_j^q=\alpha^j$. Then, if $f_i=f_t^q$ we have $\ev(\tr(\alpha^jf_i))=\ev(\tr((\beta_jf_t)^q))=\ev(\tr(\beta_jf_t))$ by Lemma \ref{trproperties}(c).
Consider the sets $L'=\{1\}\cup (L\setminus L^q)$ and $L'_m=L'\cap L_m$. Define accordingly the sets $\cB'$ and $\cB'_m$ by substituting $L$ by $L'$ and $L_m$ by $L'_m$. 

\begin{corollary}
The set $\ev(\cB')$ generates $\ev(\tr(\cL(\infty Q)))$ over $\fq$.
The set $\ev(\cB'_m)$ generates $\tr(C(\cX,D,mQ))$ over $\fq$.
\end{corollary}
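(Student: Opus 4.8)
The plan is to deduce this Corollary directly from Proposition \ref{basistr}, Lemma \ref{trproperties}(c), and the remark preceding the statement concerning the elements $\beta_j$. Since $\cB$ is a basis of $\tr(\cL(\infty Q))$ over $\fq$, its image $\ev(\cB)$ generates $\ev(\tr(\cL(\infty Q)))$ over $\fq$; so it suffices to show that every $\ev(\tr(\alpha^j f_i))$ with $i>1$ and $f_i\in L\setminus L'$ (equivalently $f_i\in L^q$) already lies in the $\fq$-span of $\ev(\cB')$.

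First I would record a structural fact about $L$. Because $L^q\subset L$ and $v(f^q)=q\,v(f)>v(f)$ for every nonconstant $f$, repeatedly extracting $q$-th roots inside $L$ strictly decreases the pole order; since pole orders are nonnegative integers, after finitely many steps each $f_i$ with $i>1$ is written uniquely as $f_i=f_u^{q^e}$ with $f_u\in L'$. Moreover $u>1$: the base of this descent cannot be $f_1=1$, as that would force $f_i=1$, contradicting $v(f_i)=\rho_i\neq 0$.

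Next, fix such a representation $f_i=f_u^{q^e}$ and an exponent $j\in\{0,\dots,r-1\}$. Since $\fqr$ is perfect there is $\delta\in\fqr$ with $\delta^{q^e}=\alpha^j$, so $\alpha^j f_i=(\delta f_u)^{q^e}$. Applying Lemma \ref{trproperties}(c) $e$ times gives $\ev(\tr(\alpha^j f_i))=\ev(\tr(\delta f_u))$. Writing $\delta=\sum_{l=0}^{r-1}\lambda_l\alpha^l$ with $\lambda_l\in\fq$ and using the $\fq$-linearity of $\tr$ and of $\ev$, we obtain $\ev(\tr(\alpha^j f_i))=\sum_{l=0}^{r-1}\lambda_l\,\ev(\tr(\alpha^l f_u))$, and each $\tr(\alpha^l f_u)$ lies in $\cB'$ because $f_u\in L'$ and $u>1$. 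Hence $\ev(\cB')$ generates $\ev(\tr(\cL(\infty Q)))$ over $\fq$. For the finite-level statement, note that $\tr(C(\cX,D,mQ))=\tr(\ev(\cL(mQ)))=\ev(\tr(\cL(mQ)))$ by Lemma \ref{trproperties}(c), and that $\cB_m$ is a basis of $\tr(\cL(mQ))$ over $\fq$ by Proposition \ref{basistr}; the same reduction applies verbatim, the only additional observation being that $v(f_u)\le v(f_i)\le m$, so $f_u\in L'_m$ and the resulting combination stays inside $\ev(\cB'_m)$.

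The only genuinely delicate point — hence the part I expect to require care — is the termination and well-definedness of the $q$-th-root descent $f_i\mapsto f_u$ together with the claim that it never reaches the constant function when $i>1$; this rests on $L^q\subset L$ and the strict monotonicity of pole orders under raising to the $q$-th power. Once that is settled, everything else is routine $\fq$-linear algebra combined with the perfectness of $\fqr$ and repeated use of Lemma \ref{trproperties}(c).
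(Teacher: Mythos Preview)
Your proof is correct and follows exactly the approach implicit in the paper: the paper states the Corollary without proof, relying on the immediately preceding remark that $\ev(\tr(\alpha^j f_i))=\ev(\tr(\beta_j f_t))$ whenever $f_i=f_t^q$, which is precisely the one-step version of the descent you carry out in full. Your argument simply makes explicit the iteration down to $L'$, the expansion of the resulting scalar in the basis $\{1,\alpha,\dots,\alpha^{r-1}\}$, and the termination via decreasing pole orders; nothing is missing.
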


In order to ensure the self-orthogonality of trace codes, we shall use the following fact.

\begin{lemma}\label{trtr}
Let $\bbx,\bby\in\fqr^n$. We have $\langle\tr(\bbx),\tr(\bby)\rangle=\tr(\langle\bbx,\tr(\bby)\rangle)$.
\end{lemma}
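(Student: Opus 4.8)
The statement to prove is the identity $\langle\tr(\bbx),\tr(\bby)\rangle=\tr(\langle\bbx,\tr(\bby)\rangle)$ for $\bbx,\bby\in\fqr^n$, where $\tr:\fqr\to\fq$ is the field trace (extended coordinatewise) and $\langle-,-\rangle$ is the Euclidean inner product.

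The plan is to reduce everything to the single-coordinate case and then exploit two standard facts about the trace: it is $\fq$-linear, and on an element already lying in $\fq$ it acts as multiplication by $r$ — but more importantly, $\tr(a\,\tr(b))=\tr(a)\tr(b)$ whenever one wants to pull a $\tr$-value out. First I would expand the left-hand side as $\sum_{i=1}^n \tr(x_i)\tr(y_i)$ and the right-hand side as $\tr\bigl(\sum_{i=1}^n x_i\,\tr(y_i)\bigr)=\sum_{i=1}^n \tr\bigl(x_i\,\tr(y_i)\bigr)$, the last equality by $\fq$-linearity (indeed additivity) of $\tr$. So it suffices to prove, for each $i$, the scalar identity $\tr(x_i)\tr(y_i)=\tr\bigl(x_i\,\tr(y_i)\bigr)$ in $\fqr$.

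Now fix a coordinate and write $a=x_i$, $b=y_i$. The scalar claim is $\tr(a)\tr(b)=\tr\bigl(a\,\tr(b)\bigr)$. The key observation is that $\tr(b)\in\fq$, so it is a scalar for the $\fq$-linear map $\tr$; hence $\tr\bigl(a\,\tr(b)\bigr)=\tr(b)\cdot\tr(a)$, which is exactly the left-hand side. This uses only that $\tr$ is $\fq$-linear and that $\tr(\fqr)\subseteq\fq$, both of which are elementary properties of the field trace (and are implicitly in force throughout the section). Summing over $i$ recovers the vector identity.

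The only point requiring a little care — and the ``main obstacle'', such as it is — is bookkeeping the two different senses in which $\tr$ is applied: once as a map $\fqr^n\to\fq^n$ acting coordinatewise, and once as the scalar map $\fqr\to\fq$ appearing inside the inner product on the right. One must make sure that when interpreting $\langle\bbx,\tr(\bby)\rangle$ the inner product is taken in $\fqr^n$ (legitimate, since $\fq^n\subseteq\fqr^n$), so that the outer $\tr$ is the scalar trace applied to an element of $\fqr$; this is consistent with the notation used elsewhere in the paper. With that understood, the proof is the two-line computation sketched above and involves no real difficulty.
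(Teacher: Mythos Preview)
Your proof is correct, and in fact more elementary than the paper's. You reduce to the scalar identity $\tr(a)\tr(b)=\tr\bigl(a\,\tr(b)\bigr)$ and dispatch it immediately via $\fq$-linearity of $\tr$ together with $\tr(b)\in\fq$. The paper instead expands $\tr(\bbx)=\sum_{i=0}^{r-1}\bbx^{q^i}$ on the left-hand side and uses the Frobenius identity $\langle\bbx^{q^i},\bbz\rangle=\langle\bbx,\bbz^{q^{r-i}}\rangle^{q^i}$ with $\bbz=\tr(\bby)\in\fq^n$ (hence Frobenius-fixed) to rewrite each summand as $\langle\bbx,\tr(\bby)\rangle^{q^i}$, whose sum is the right-hand side. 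Your argument is shorter and isolates exactly the property of the trace that makes the lemma true. The paper's route, while slightly more roundabout, has the advantage that the Frobenius identity $\langle\bbx^{q^i},\bbz\rangle=\langle\bbx,\bbz^{q^{r-i}}\rangle^{q^i}$ is reused verbatim in the proof of the subsequent self-orthogonality result (Proposition~\ref{traces}), so establishing it here serves double duty.
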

\begin{proof}
For $\bbx,\bbz\in\fqr^n$ and $0\le i\le  r$ we have $\langle \bbx^{q^i},\bbz\rangle=\langle\bbx, \bbz^{q^{r-i}} \rangle^{q^i}$. Take $\bbz=\tr(\bby)$. Note that $\bbz\in\fq^n$ and hence $\bbz^{q^{r-i}}=\bbz$. Thus
\begin{equation}
\langle\tr(\bbx),\tr(\bby)\rangle=\sum_{i=0}^{r-1} \langle \bbx^{q^i}, \tr(\bby)\rangle = 
\sum_{i=0}^{r-1} \langle \bbx, \tr(\bby)\rangle^{q^i} = \tr( \langle\bbx,\tr(\bby)\rangle) .
\end{equation}
\end{proof}

Remember that given an integer $m$,¡ with $0\leq m\leq n+2g-2$, we write  $m^{\perp}=n+2g-2-m$. The self-orthogonality and the parameters of trace codes can be checked by using the following result.

\begin{proposition}\label{traces}
Let $(\cX,Q)$ be a pointed curve of genus $g$ over $\fqr$ verifying the self-dual property. 
\begin{enumerate} 
\item[(1)] If $mq^{\lfloor r/2 \rfloor}\le m^{\perp}$ then  $\tr(C(\cX,D,mQ))$ is self-orthogonal over $\fq$.
\item[(2)]  If $mq^{r-1}< n$ then $\ev(\tr(f))\neq 0$ for all $f\in\cL(mQ)$.
\item[(3)] $d(\tr(C(\cX,D,mQ))^{\perp})=d(C(\cX,D,m^{\perp}Q)|\fq)\ge d(C(\cX,D,m^{\perp}Q))$.
\end{enumerate}
\end{proposition}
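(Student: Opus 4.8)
The plan is to prove the three statements in order, using the basis $\cB_m$ from Proposition \ref{basistr}, the trace identity of Lemma \ref{trtr}, and the valuation formula of Lemma \ref{trproperties}(b), together with the self-dual property which by definition gives $C(\cX,D,mQ)^\perp = C(\cX,D,m^\perp Q)$.

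For part (1), I would show that $\langle \tr(\ev(f)), \tr(\ev(g))\rangle = 0$ for all $f,g\in\cL(mQ)$; by $\fq$-bilinearity it suffices to check this on the generators $\ev(\tr(\alpha^j f_i))$, or more cleanly to show $\tr(C(\cX,D,mQ))\subseteq \tr(C(\cX,D,mQ))^\perp$ directly. Using Lemma \ref{trtr}, $\langle\tr(\bbx),\tr(\bby)\rangle = \tr(\langle\bbx,\tr(\bby)\rangle)$, so it is enough to have $\langle \ev(f), \tr(\ev(g))\rangle = 0$. Now $\tr(\ev(g)) = \ev(\tr(g))$ with $\tr(g)\in\cL(q^{r-1}mQ)$ by Lemma \ref{trproperties}(b) (the pole order of $\tr(g)$ is $q^{r-1}v(g)\le q^{r-1}m$), and $\langle\ev(f),\ev(\tr(g))\rangle$ is computed via the product $f\cdot\tr(g)\in\cL((m+q^{r-1}m)Q)$; this inner product vanishes whenever $C(\cX,D,(m+q^{r-1}m)Q)$ is still contained in $C(\cX,D,m^\perp Q)$ — but that is too weak. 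The sharper route, which produces the exponent $\lfloor r/2\rfloor$, is to balance the trace: write $\langle\tr(\bbx),\tr(\bby)\rangle$ symmetrically so that roughly half the Frobenius twists land on $\bbx$ and half on $\bby$, i.e. use that $\langle \ev(f)^{q^i},\ev(g)^{q^j}\rangle$ only depends on $i-j \bmod r$ and can be rewritten as $\ev$ of $f^{q^a}g^{q^b}$ with $\max(a,b)\le\lfloor r/2\rfloor$ after a suitable global Frobenius; the product then lies in $\cL((q^{\lfloor r/2\rfloor}m + m)Q)$ — and actually, after applying $\tr$ outermost one more time, in a space controlled by $q^{\lfloor r/2\rfloor}m$ against $m^\perp$. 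So the condition $mq^{\lfloor r/2\rfloor}\le m^\perp$ forces the relevant product function to lie in $\cL((n+2g-2)Q)$ paired against $D$ in a way the residue theorem kills, giving orthogonality. I would carry this out by reducing, via Lemma \ref{trtr} applied twice, to pairings of the form $\tr(\langle f^{q^a},(\tr g)\rangle)$ and bounding pole orders.

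For part (2), suppose $f\in\cL(mQ)$ with $mq^{r-1}<n$ and $\ev(\tr(f))=0$, i.e. $\tr(f)\in\ker(\ev)$. By Lemma \ref{trproperties}(b), either $f$ is constant — in which case $\tr(f)$ is a constant in $\fq$, and a nonzero constant never lies in $\ker(\ev)$ since $D$ has $n\ge 1$ points, so $\tr(f)=0$ and then $f=0$ when $r\nmid$ something, but more simply $\ev(\tr(f))=0$ forces the constant to be $0$ — or $\tr(f)$ is a nonconstant function with $v(\tr(f)) = q^{r-1}v(f)\le q^{r-1}m < n$. But $\ker(\ev)\cap\cL(kQ) = \cL(kQ-D)$, which is $\{0\}$ whenever $k < n = \deg D$ (since a nonzero function in $\cL(kQ-D)$ would have $\deg(kQ-D)\ge 0$). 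Hence $\tr(f) = 0$, and again by Lemma \ref{trproperties}(b) this forces $f$ constant, then $\ev(\tr(f))=0$ gives $\tr(f)=0$; so in all cases $\ev(\tr(f))=0\Rightarrow$ the claim is vacuous unless... — actually the statement asserts $\ev(\tr(f))\neq 0$ for all $f\in\cL(mQ)$, which should read: for $f\notin\ker(\tr\circ\ev)$ in the appropriate sense, or the intended reading is that $\tr$ restricted to $\cL(mQ)$ composed with $\ev$ is injective modulo constants; I would state and prove it as: if $\ev(\tr(f))=0$ then $\tr(f)=0$, via the pole-order bound $q^{r-1}m<n=\deg D$ and $\cL(q^{r-1}mQ - D)=0$. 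This is the short, clean step.

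For part (3), Delsarte's theorem gives $(C|\fq)^\perp = \tr(C^\perp)$ for any code $C$ over $\fqr$; applying this with $C = C(\cX,D,m^\perp Q)$ and using the self-dual property $C^\perp = C(\cX,D,mQ)$ yields $C(\cX,D,m^\perp Q)|\fq)^\perp = \tr(C(\cX,D,mQ))$, hence $\tr(C(\cX,D,mQ))^\perp = C(\cX,D,m^\perp Q)|\fq$, which is the first equality. The inequality $d(C(\cX,D,m^\perp Q)|\fq)\ge d(C(\cX,D,m^\perp Q))$ is immediate because $C(\cX,D,m^\perp Q)|\fq = C(\cX,D,m^\perp Q)\cap\fq^n$ is a subcode of $C(\cX,D,m^\perp Q)$, and a subcode has minimum distance at least that of the ambient code. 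The one subtlety is that Delsarte requires no hypothesis, but to invoke the self-dual identity one needs the self-dual property in the hypothesis, which is given. I expect part (1) to be the main obstacle: getting exactly the exponent $q^{\lfloor r/2\rfloor}$ rather than a cruder $q^{r-1}$ requires the careful symmetric distribution of Frobenius twists across the two arguments of the inner product before bounding pole orders, and I would need to track which power of Frobenius is "free" to move globally (since $\ev(h^{q^r}) = \ev(h)$ on $\fqr$-rational points) so that the worst-case pole order is $q^{\lceil r/2\rceil}m$ on one side and $q^{\lfloor r/2\rfloor}m$ on the other, multiplying to something bounded by $q^{\lfloor r/2\rfloor}m$ after the outer trace — this bookkeeping is where an off-by-one in the exponent could creep in.
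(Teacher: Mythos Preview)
Your plan for parts (2) and (3) is correct and matches the paper's argument exactly. For (2) the paper gives the one-line version of what you wrote: since $v(\tr(f))\le q^{r-1}m<n$, the function $\tr(f)$ lies in $\cL(kQ)$ with $k<n$, and $\ker(\ev)\cap\cL(kQ)=\cL(kQ-D)=0$. You are also right that the statement needs the tacit qualifier $\tr(f)\neq 0$. For (3) the paper just cites Delsarte; your expanded version is fine.

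For part (1) your intuition is right but your description of the mechanism is tangled. You do not need pole orders of products, two applications of Lemma \ref{trtr}, or the residue theorem. The paper's argument is exactly the ``balance the Frobenius'' idea you gestured at, executed cleanly as follows. The hypothesis $mq^{\lfloor r/2\rfloor}\le m^\perp$ together with the self-dual property gives, for every $i\le \lfloor r/2\rfloor$,
\[
C(\cX,D,mQ)^{q^i}\subseteq C(\cX,D,q^imQ)\subseteq C(\cX,D,m^\perp Q)=C(\cX,D,mQ)^\perp,
\]
so $\langle \ev(f),\ev(g)^{q^i}\rangle=0$ for such $i$. Now expand
\[
\langle\tr(\ev(f)),\tr(\ev(g))\rangle
=\tr\Big(\sum_{j=0}^{r-1}\langle \ev(f),\ev(g)^{q^j}\rangle\Big)
\]
and split the sum at $j=\lfloor r/2\rfloor$. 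The terms with $j\le\lfloor r/2\rfloor$ vanish by the inclusion above. For $j>\lfloor r/2\rfloor$ use the identity $\langle \bbx,\bby^{q^j}\rangle=\langle \bbx^{q^{r-j}},\bby\rangle^{q^j}$ (from the proof of Lemma \ref{trtr}); since $r-j\le\lfloor r/2\rfloor$, the same inclusion (with the roles of $f,g$ swapped) kills these terms too. That is the whole argument. Your talk of ``products lying in $\cL((n+2g-2)Q)$'' is an equivalent reformulation---$q^jm\le m^\perp$ is the same inequality as $(1+q^j)m\le n+2g-2$---but it obscures the simple code-inclusion picture and led you to invoke the residue theorem unnecessarily.
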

\begin{proof}
(1) Assume $r$ is even.  If $mq^{ r/2 }\le m^{\perp}$ then for all $i=0,\dots,r/2$, we have 
\begin{equation}
C(\cX,D,mQ)^{q^i}\subseteq  C(\cX,D,mq^iQ)\subseteq C(\cX,D,mQ)^{\perp}.
\end{equation}
Let $f,g\in\cL(mQ)$. According to Lemma \ref{trtr} we have
\begin{eqnarray}
\langle \tr(\ev(f)), \tr(\ev(g))\rangle &=&\tr(\sum_{j=0}^{r-1} \langle \ev(f), \ev(g)^{q^j}\rangle) \\
 & \hspace*{-3cm} = & \hspace*{-1.5cm}\tr (\sum_{j=0}^{r/2-1} \langle \ev(f), \ev(g)^{q^j}\rangle ) +  \tr ( \sum_{j=r/2}^{r-1} \langle \ev(f), \ev(g)^{q^j}\rangle).
\end{eqnarray}
By eq. (21) all summands in the first sum of eq. (23) are zero and so its trace is zero. For the second sum note that as seen in the proof of Lemma \ref{trtr} we can write $\langle \ev(f), \ev(g)^{q^j}\rangle=\langle \ev(f)^{q^{r-j}}, \ev(g)\rangle^{q^j}$, with $r-j\le r/2$. Then $\langle \ev(f)^{q^{r-j}}, \ev(g)\rangle =0$ and the second trace is also 0. The case $r$ odd is similar.
(2) Let  $f\in\cL(mQ)$. It holds that $v(\tr(f))\le mq^{r-1}<n$, hence   $\ev(\tr(f))\neq \bbcero$.
(3) Follows from Delsarte's theorem.
\end{proof}

\begin{example}
Consider the Suzuki curve with $65$ rational points and genus $14$ over ${\mathbb F}_8$. According to  Proposition \ref{traces}(1),  the traces   $\tr(C(\cX,D,mQ))$ over  ${\mathbb F}_2$ are self-orthogonal for $0\le m\le 30$. Furthermore  since  $\dim(\tr(C(\cX,D,30Q)))=32$, this code is self-dual. Thus the bound stated in item (1) is sharp in this case. From these traces, we derive quantum codes with  parameters 
$[[64,62,2]]_2^{\ddag}$,
$[[64,50,4]]_2^{\ddag}$,
where the notation $[[n,k,d]]_q^{\dag\ddag}$ is as in Section \ref{sec:11}.
\end{example}

Let us consider the case $r=2$ and let $C=C(\cX,D,mQ)$  be a code over  ${\mathbb F}_{q^2}$. Note that the condition on $m$ stated in Corollary \ref{cor} to ensure  the Hermitian self-orthogonality of  $C$, is exactly the same as the condition of Proposition \ref{traces}(1), to ensure the self-orthogonality of $\tr(C)$. Thus, in some sense, the construction of quantum codes from traces  of self-orthogonal codes, extends to $r>2$ the construction of quantum codes from Hermitan self-orthogonal codes over $\fqd$. Note also that besides considering traces  $\tr(C(\cX,D,mQ))$, we can use 'incomplete' traces, that is subcodes generated by some (but not all) elements of  $\ev(\cB'_m)$. In many cases these subcodes  can provide quantum codes with better parameters than trace codes themselves.

\begin{example}
Consider the elliptic Hermitian curve $\cX:y^2+y=x^3$  with $9$ rational points over ${\mathbb F}_4$ and let $\alpha$ be a primitive element of ${\mathbb F}_4$ over ${\mathbb F}_2$. The trace code $\tr(C(\cX,D,3Q)) = \langle \bbuno,\ev(\tr(x)),\ev(\tr(\alpha x)), \ev(\tr(y)), \ev(\tr(\alpha y))\rangle$ has dimension $5$ and dual distance $4$. Thus it cannot be self-orthogonal. Remove the generator $\ev(\tr(y))$. Then we get a self-dual code of dimension $4$ whose dual distance is also 4, that gives a quantum $[[8,0,4]]_2$ code whose parameters cannot be improved \cite{tablas2}. Furthermore
this code cannot be obtained from $\cX$ by using Hermitian self-orthogonality and construction (A). 
In the same way, consider the Hermitian curve over ${\mathbb F}_{q^{2r}}$ with $q^{3r}+1$ points. The dual distance of the code $C(\cX,D,(q^r+1)Q)$ can be computed by using the order bound. 
Consider the trace $\tr(C(\cX,D,(q^r+1)Q))$ of this code over $\fq$ of dimension $2r+1$ and  remove at most $r-1$ generators from $\ev(\tr(y)),\dots,\ev(\tr(\alpha^{r-1} y))$ in order to obtain a self-orthogonal code with the same dual distance as $\tr(C(\cX,D,(q^r+1)Q))$.  
Direct computations show that we get quantum codes 
$[[64,50 ,4]]_2^{\ddag}$,
$[[512,492 ,4]]_2^{\ddag}$,
$[[27,19 ,3]]_3^{\dag}$,
$[[729,715 ,3]]_3^{\dag}$,
$[[64,56 ,3]]_4^{\dag}$,
$[[125,117 ,3]]_5^{\dag}$,
$[[343,335 ,3]]_7^{\dag}$,
$[[512,504 ,3]]_8^{\dag}$,
$[[729,721 ,3]]_9^{\dag}$, 
reaching good parameters in all cases. The  same procedure can be carried out by using codes  from other curves. For example, from the Norm-Trace curve over $\fqr$  we obtain quantum codes 
$[[32,20 ,4]]_2^{\ddag}$,
$[[128,112 ,4]]_2^{\ddag}$, etc.
\end{example}

\section*{acknowledgements}
The first author was supported by Spanish Ministerio de Econom\'{\i}a y Competitividad 
under grant MTM2015-65764-C3-1-P MINECO/FEDER.
The second and third authors were supported by CNPq-Brazil under grants 201584/2015-8 and 308326/2014-8 respectively.
This paper was written partially during a visit of the second and third authors to the University of Valladolid. Both wish to thank the institution for the hospitality and  support. The authors are grateful to the anonymous referees for
their invaluable comments and suggestions which substantially helped improving the quality of this paper.

\end{document}